\newtheorem{theorem}{Theorem}[section]
\newtheorem{proposition}[theorem]{Proposition}
\newcommand{\ra}{\rightarrow}
\newcommand{\smallavg}[1]{\langle #1 \rangle}
\newcommand{\nto}{\mbox{$\;\longrightarrow_{\hspace*{-0.42cm}{\small n}}\;$~}}
\newtheorem{lemma1}{Lemma} 
\newtheorem{thm1}{Theorem} 
\begin{document}
\title[No Replica Symmetry Breaking]
{{Absence of Replica Symmetry Breaking in the  
		Random Field Mixed-Spin Ginzburg-Landau Model
}}
\author{R. Vila}
\address{
\newline 
Departamento de Estat\'istica -
Universidade de Bras\'ilia, Brazil,
Email: \textup{\tt rovig161@gmail.com}
}

\date{\today}

\keywords{Random Field Ginzburg-Landau Model $\cdot$ Replica symmetry.}
\subjclass[2010]{MSC 82B20, MSC 82B44, MSC 60K35.}

\begin{abstract}
In this paper, an extension of the random field Ginzburg-Landau model 
on the hypercubic lattice is considered by adding $p$-spin ($p\geqslant 2$) interactions coupled to general disorders. This new model is called the random field mixed-spin Ginzburg-Landau model.
We proved that, in the infinite volume limit of this model, the variance of spin overlap vanishes. 
\end{abstract}

\maketitle
\section{Introduction}
A long-standing debate on the presence of elusive spin glass phase (i.e., when the ferromagnetic susceptibility is finite, while the spin glass susceptibility is infinite) in several systems has been the subject of research for many years. For example, in \cite{KRZ-10,KRSZ-11}, among other things, the authors argued that there is no spin glass phase at equilibrium in the random field Ising model, the random field Ginzburg-Landau model and the random temperature Ginzburg-Landau model, for non-negative interactions and arbitrary disorders on an arbitrary lattice.
In 2015, Chatterjee \cite{chatterjee2015disorder} provided a rigorous
mathematical proof supporting part (absence of replica symmetry breaking) of the findings claimed in \cite{KRZ-10,KRSZ-11} for the random field Ising model, in the case of Gaussian disorders. In 2019, Itoi and Utsunomiya \cite{Itoi-2019} extended the Chatterjee results for a disordered spin model (the random field Ginzburg-Landau model) in which the disorder is Gaussian and the single-spins are not limited. 
More recently, Roldan and Vila (2020) \cite{RV2020} have extended the Chatterjee results for the random field Ising model in weak disorders.

The objective of this paper is to prove the absence of replica symmetry breaking in the random field Ginzburg-Landau model allowing   $p$-spin interactions which are coupled to  disorders not necessarily Gaussian. This new model is called the random field mixed-spin Ginzburg-Landau model. Results involving absence of replica symmetry breaking in the random field Ginzburg-Landau model and related ones can be found in \cite{chatterjee2015disorder,Itoi2019,Itoi-2019,RV2020}. Roughly speaking, by adding asymptotically vanishing perturbations to the Hamiltonian corresponding to Gibbs measure of the random field Ginzburg-Landau model, we prove that, in the thermodynamic limit of this model, the degree to which the values of a spin overlap differ from its expectation value vanishes. So, in our paper, as the ones in \cite{chatterjee2015disorder,Itoi-2019}, we give a support in favor of the absence of the spin glass phase, but for general disorders.  

This paper is organized as follows. In Section 2, we present the random field mixed-spin Ginzburg-Landau model, and some definitions related to this model. 
In Section 3 we state the main result of paper. In Section 4, we present the proof of the main result in details.

\section{The model and some definitions}

The model is defined as follows.
Given $n\geqslant 1$, let 
$V_n=\mathbb{Z}^d\cap[1,n]^d$, $d\geqslant 1$, be a finite subset of vertices of 
$d$-dimensional hypercubic lattice 
with cardinality denoted by $|V_n|$.
The 
Hamiltonian of the random field mixed-spin Ginzburg-Landau model on the set of spins configurations  $\mathbb{R}^{V_n}$ reads
\begin{align}\label{ham-1}
H_{n;\boldsymbol{\alpha}}(\phi)
=
H_n(\phi;g)
+
u\sum_{x\in V_n} \phi_x^4
-
r \sum_{x\in V_n} \phi_x^2
-
H^{\rm per}_{n;\boldsymbol{\alpha}}(\phi;\xi),
\end{align}
where the Hamiltonian $H_n(\phi;g)$ 
is
given by
\begin{align}\label{Hamiltonian-g}
H_n(\phi;g)
=
-	
\beta \sum_{\langle xy \rangle}\phi_x \phi_y 
- 
h\sum_{x\in V_n} g_x \phi_x.
\end{align}
Here, $\langle xy \rangle$ denotes the set of ordered pairs in $V_n$ of nearest neighbors, $\beta>0$ and $h>0$ are called inverse temperature and field strength, respectively. Furthermore,
$u>0$, $r\in\mathbb{R}$ and $\phi=(\phi_x)\in \mathbb{R}^{V_n}$ is a spin configuration.
In this paper, the random external magnetic field
$g=(g_x)$ (called disorder) 
consists of independent and identically distributed (i.i.d.) standard Gaussian random variables. Following \cite{RV2020}, it is not trivial to remove the Gaussian hypothesis on $g$ because a generalization of the Gaussian integration by parts for not limited spins is needed which is not always possible to obtain.

The perturbing Hamiltonian $H^{\rm per}_{n;\boldsymbol{\alpha}}(\phi;\xi)$ in \eqref{ham-1} is defined in function of a pure $p$-spin Hamiltonian $H_{n;p}(\phi;\xi)$, for $p\geqslant 2$, as follows
\begin{align}\label{Function-H}
\begin{array}{lllll}
&\displaystyle
H^{\rm per}_{n;\boldsymbol{\alpha}}(\phi;\xi)
=
i(1-\delta_{\boldsymbol{\alpha},\boldsymbol{0}})
c_n\sum_{p=2}^\infty \alpha_p 2^{-p}
H_{n;p}(\phi;\xi);
\\[0,5cm]
&\displaystyle
H_{n;p}(\phi;\xi)
=
{1\over\vert V_n\vert^{(p-1)/2}}
\sum_{{\bf x}\in \otimes_{l=2}^p V_n} \xi_{\bf x} \phi_{\bf x};
\end{array}
\end{align}
where $\delta$ is the Kronecker delta function, $i$ is the imaginary unit, and
\begin{align*}
\phi_{\bf x}= \phi_{x_1}\cdots \phi_{x_p},
\end{align*}
for ${\bf x}=(x_1,\ldots,x_p)$.
Here in \eqref{Function-H} and throughout this paper,
the sequences of real numbers $(c_n)$  and
$\boldsymbol{\alpha}=(\alpha_p)$ satisfy $c_n\nto 0$ and $\vert\alpha_p\vert\leqslant 1$, respectively. In order that the Hamiltonian 
$H_{n;\boldsymbol{\alpha}}(\phi)$ in \eqref{ham-1} to be well-defined almost surely (see Proposition \ref{prop-def-ham}), we  assume the following regularity condition:
\begin{align}\label{regularity condition}
\sum_{p=2}^\infty \alpha_p 2^{-p} C_{2p}^{1/2}
<\infty,
\end{align}
where the sequence $(C_{k})$ is formed by  upper bounds of expectation of the function of a spin variable whose existence is guaranteed by Lemma II.1 of \cite{Itoi-2019} (see  Proposition \ref{Itoi-bound}).
The random sequence $\xi=(\xi_{\bf x})$
consists of i.i.d. real-valued random variables $\xi_{\bf x}$ with zero-mean and unit-variance. 
Moreover,
the disorders $g=(g_x)$ and $\xi=(\xi_{\bf x})$ are assumed to be independent of each other.

The Gibbs measure of the random field mixed-spin Ginzburg-Landau model on 
$\mathbb{R}^{V_n}$ is given by
\begin{align}\label{gibbs-measure2}
{G}_{n;\boldsymbol{\alpha}}(\{\phi\})
=
{1\over Z_{n;\boldsymbol{\alpha}}}
\exp\big(
-H_{n;\boldsymbol{\alpha}}(\phi)
\big), \quad \phi=(\phi_x)\in \mathbb{R}^{V_n},
\end{align}
where $H_{n;\boldsymbol{\alpha}}(\phi)$ is as in \eqref{ham-1}.
Note that the classical random field Ginzburg-Landau
model 
is recovered 
when
$\boldsymbol{\alpha}=\boldsymbol{0}$. 
Denoting ${\rm D}\phi= \prod_{x\in V_n}\exp(-u\phi_x^4+r\phi_x^2) \, {\rm d} \phi_x$, 
the constant $Z_{n;\boldsymbol{\alpha}}$ is the partition function and it is given by
\begin{align}\label{partition-function}
Z_{n;\boldsymbol{\alpha}}
=
\int_{\mathbb{R}^{|V_n|}}  
\exp\big(-	
H_n(\phi;g)+H^{\rm per}_{n;\boldsymbol{\alpha}}(\phi;\xi)
\big) \, {\rm D}\phi.
\end{align}
\smallskip 

For a function $f:(\mathbb{R}^{V_n})^m\to\mathbb{R}$, $m\geqslant 1$, we define
\begin{align*}
\langle \,f\,\rangle_{\boldsymbol{\alpha}}
&\coloneqq
\dfrac{1}{Z_{n;\boldsymbol{\alpha}}^m}
\int
f(\phi^1,\ldots,\phi^m) 
\exp\left(\sum_{s=1}^{m}\biggl(
-	
H_n(\phi^s;g)
+
H^{\rm per}_{n;\boldsymbol{\alpha}}(\phi^s;\xi)
\biggr)\right) 
{\text{D}\phi^1}\cdots {\text{D}\phi^m}.
\end{align*}

Following the notation of Talagrand \cite{talagrand2003spin}, we write
$ 
\mathbb{E}\langle \,f\,\rangle_{\boldsymbol{\alpha}}
$
to denote the average of the Gibbs expectation $\langle \,f\,\rangle_{\boldsymbol{\alpha}}$  over all the realizations of the disorder $(g_x, \xi_{\bf x})$, for $p\geqslant 2$.


\section{The main result}
If $\phi^1,\phi^2,\ldots$ are i.i.d
configurations under (random) Gibbs measure \eqref{gibbs-measure2}, known as replicas, the  spin overlap
between two replicas $\phi^\ell$, $\phi^{\ell'}$ is defined as
\begin{align}\label{overlap}
R_{\ell,\ell'}
\coloneqq 
{1\over |V_n|}\sum_{x\in V_n} 
\phi^\ell_x \phi^{\ell'}_x, \quad \forall \ell,\ell'.
\end{align}
Note that $R_{\ell,\ell}\neq 1$.

%


The system is said to exhibit replica symmetry breaking (as stated by Chatterjee
in the random field Ising model \cite{chatterjee2015disorder}) when a finite variance calculated in the replica symmetric expectation, in the infinite volume limit, is observed
\[
\lim_{n\to\infty}
\mathbb{E}\big\langle (R_{1,2}-\mathbb{E}\langle R_{1,2}\rangle_{\boldsymbol{\alpha}} )^2\big\rangle_{\boldsymbol{\alpha}}
> 0.
\]

In the present paper, 
the next main theorem shows that
this does not happen in the random field mixed-spin Ginzburg-Landau model.

\begin{thm1}
	\label{rsbthm}
	For almost all coupling constants $(\beta, h,u,r)\in (0, \infty)^3\times\mathbb{R}$, in the 
	random field mixed-spin Ginzburg-Landau model, the following variance vanishes
	\[
	\mathbb{E}\big\langle (R_{1,2}-\mathbb{E}\langle R_{1,2}\rangle_{\boldsymbol{\alpha}} )^2\big\rangle_{\boldsymbol{\alpha}}
	\underset{ n\rightarrow\infty}{\xrightarrow{\hspace*{0,9cm}}} 0
	\]
	in the infinite volume limit.
\end{thm1}
The rest of this paper is devoted to the proof of Theorem \ref{rsbthm}.

\section{Proof}
Before beginning the proof details of the main theorem of this paper, 
we will enunciate and prove some results that will 
facilitate the presentation of paper.

\subsection{Upper bound for the expectation of the function of a spin variable}
Our first tool will be the following proposition. Its proof appears in Itoi and Utsunomiya (2019) \cite{Itoi-2019}, Lemma II.1.
\begin{proposition}\label{Itoi-bound}
	There exists some positive constant $C_k=C_k(h)$ independent of the system size $n$ so that
	\begin{align*}
	\mathbb{E}\langle\phi_{ x}^k\rangle_{\boldsymbol{0}}\leqslant C_k \ \quad \text{for all} \ {x}\in V_n,
	\end{align*}
	where $k$ is an arbitrary even  integer.
\end{proposition}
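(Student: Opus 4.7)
The plan is to exploit the quartic confinement $-u\phi_x^4$ in the Hamiltonian by conditioning on one site at a time and closing a self-consistent inequality for the moments. Fix $x\in V_n$ and condition on the remaining spins $\phi_{V_n\setminus\{x\}}$ and on the disorder $g$. At $\boldsymbol{\alpha}=\boldsymbol{0}$ the conditional density of $\phi_x$ is proportional to $\exp(a_x\phi_x+r\phi_x^2-u\phi_x^4)$, where $a_x=\beta\sum_{y\sim x}\phi_y+hg_x$ (up to the ordering convention in $\langle xy\rangle$) is the effective local field. Hence the conditional $k$-th moment equals
\[
M_k(a_x):=\frac{\int_{\mathbb{R}}\phi^k\exp(a_x\phi+r\phi^2-u\phi^4)\,d\phi}{\int_{\mathbb{R}}\exp(a_x\phi+r\phi^2-u\phi^4)\,d\phi}.
\]

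The heart of the proof is the deterministic single-site bound
\[
M_k(a)\leq C'_k\bigl(1+|a|^{k/3}\bigr)\qquad\text{for all }a\in\mathbb{R},
\]
with $C'_k$ depending only on $u,r,k$. I would establish this by a Laplace-type analysis after the rescaling $\phi=|a|^{1/3}\eta$: the exponent becomes $|a|^{4/3}(\operatorname{sign}(a)\,\eta-u\eta^4)+O(|a|^{2/3})\eta^2$, which for large $|a|$ concentrates on a neighborhood of the maximizer $\eta^*=\operatorname{sign}(a)(4u)^{-1/3}$ of $\operatorname{sign}(a)\eta-u\eta^4$, so the ratio of integrals is bounded by a constant, yielding $M_k(a)\leq\mathrm{const}\cdot|a|^{k/3}$; for bounded $|a|$, $M_k(a)$ is continuous in $a$ and hence trivially bounded. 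Extracting a uniform-in-$a$ constant from tail estimates on the rescaled density is the main technical obstacle.

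Granting the single-site estimate, the tower property yields
\[
\mathbb{E}\langle\phi_x^k\rangle_{\boldsymbol{0}}=\mathbb{E}\langle M_k(a_x)\rangle_{\boldsymbol{0}}\leq C'_k\bigl(1+\mathbb{E}\langle|a_x|^{k/3}\rangle_{\boldsymbol{0}}\bigr).
\]
Using $|a_x|^{k/3}\leq C''_k\bigl(\sum_{y\sim x}|\phi_y|^{k/3}+|hg_x|^{k/3}\bigr)$, the finiteness of all Gaussian moments of $g_x$, Jensen's inequality $\mathbb{E}\langle|\phi_y|^{k/3}\rangle_{\boldsymbol{0}}\leq(\mathbb{E}\langle|\phi_y|^k\rangle_{\boldsymbol{0}})^{1/3}$ (concavity of $t\mapsto t^{1/3}$), and setting $m_k:=\sup_{x\in V_n}\mathbb{E}\langle|\phi_x|^k\rangle_{\boldsymbol{0}}$, I obtain the self-consistent inequality $m_k\leq A_k+B_k m_k^{1/3}$ with $A_k,B_k$ depending only on $u,r,\beta,h,d,k$ and independent of $n$. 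Since $x\mapsto x-B_k x^{1/3}-A_k$ tends to $+\infty$ as $x\to\infty$, this forces $m_k$ to stay below the largest real root of $x=A_k+B_k x^{1/3}$, producing a finite $n$-independent $C_k$ with $\mathbb{E}\langle\phi_x^k\rangle_{\boldsymbol{0}}=\mathbb{E}\langle|\phi_x|^k\rangle_{\boldsymbol{0}}\leq C_k$ for every even $k$ and every $x\in V_n$, as claimed.
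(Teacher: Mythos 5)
The paper itself offers no proof of this proposition: it is imported verbatim as Lemma II.1 of Itoi and Utsunomiya \cite{Itoi-2019}, so your argument has to stand on its own. Your overall strategy --- condition on the spins off $x$, bound the $k$-th moment of the single-site measure proportional to $\exp(a\phi+r\phi^2-u\phi^4)$ by $C_k'(1+|a|^{k/3})$, and close the self-consistent inequality $m_k\leqslant A_k+B_k m_k^{1/3}$ --- is sound, and the exponent $k/3$ is the right one (the maximizer of the exponent sits at $\phi^*\sim (a/4u)^{1/3}$). The reduction of $\mathbb{E}\langle |a_x|^{k/3}\rangle_{\boldsymbol 0}$ to $m_k^{1/3}$ via subadditivity/convexity of powers, Jensen, and the Gaussian moments of $g_x$ is also correct, and you are right to average over the disorder \emph{before} taking the supremum over sites, since $\sup_{x\in V_n}|g_x|$ grows with $n$.

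Two gaps remain. First, and most importantly as a matter of logic, the inequality $m_k\leqslant A_k+B_k m_k^{1/3}$ yields nothing unless you first establish that $m_k=\sup_{x\in V_n}\mathbb{E}\langle|\phi_x|^k\rangle_{\boldsymbol 0}$ is finite for each fixed $n$; otherwise it reads $\infty\leqslant\infty$. This a priori finiteness is true and not hard --- bound $\beta\sum_{\langle xy\rangle}\phi_x\phi_y\leqslant c_d\beta\sum_x\phi_x^2$, absorb $\phi_x^2$ and $hg_x\phi_x$ into $\tfrac{u}{2}\phi_x^4$ via Young's inequality at the cost of $C|g_x|^{4/3}$, factorize numerator and denominator into single-site integrals, and use $\mathbb{E}\,e^{C|g_x|^{4/3}}<\infty$ --- but it must be stated, because the entire bootstrap rests on it. Second, the single-site estimate $M_k(a)\leqslant C_k'(1+|a|^{k/3})$ is the heart of the proof and is only sketched; you yourself flag it as the main obstacle. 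To make it rigorous one needs, uniformly in $a$, a lower bound on the denominator (e.g.\ by integrating over an interval of length of order $\min(1,|a|^{-1/3})$ around $\phi^*$) together with an upper bound on the numerator exploiting that the exponent decays at least quadratically away from $\phi^*$ at a rate growing with $\max(|\phi|,|\phi^*|)$; this is standard Laplace-method bookkeeping but is not yet on the page. With these two items supplied, your argument becomes a legitimate self-contained alternative to citing \cite{Itoi-2019}.
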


\begin{proposition}\label{prop-bound-H}
	The expectation of the	function of a pure $p$-spin Hamiltonian $H_{n;p}(\phi;\xi)$ has an upper bound
	\begin{align*}
	\mathbb{E}\big\langle [H_{n;p}(\phi;\xi)]^2 \big\rangle_{\boldsymbol{0}}\leqslant
	\vert V_n\vert C_{2p}, \quad p\geqslant 2.
	\end{align*}
\end{proposition}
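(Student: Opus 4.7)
The plan is to exploit the fact that at $\boldsymbol{\alpha}=\boldsymbol{0}$ the perturbation $H^{\rm per}_{n;\boldsymbol{\alpha}}$ is absent, so the Gibbs measure $\langle\,\cdot\,\rangle_{\boldsymbol{0}}$ depends only on the disorder $g$ and is completely independent of the auxiliary disorder $\xi$. This lets us perform the $\xi$-average first and collapse everything to a quantity controlled by Proposition \ref{Itoi-bound}.

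First, I would expand the square:
\[
[H_{n;p}(\phi;\xi)]^2
=
\frac{1}{\vert V_n\vert^{p-1}}
\sum_{\mathbf{x},\mathbf{y}} \xi_{\mathbf{x}}\xi_{\mathbf{y}}\,\phi_{\mathbf{x}}\phi_{\mathbf{y}}.
\]
Taking the Gibbs average $\langle\,\cdot\,\rangle_{\boldsymbol{0}}$ and then the expectation $\mathbb{E}$ over both disorders, I would use Fubini together with the independence of $\xi$ from $g$ and from $\langle\,\cdot\,\rangle_{\boldsymbol{0}}$ to pull $\mathbb{E}_\xi[\xi_{\mathbf{x}}\xi_{\mathbf{y}}]=\delta_{\mathbf{x},\mathbf{y}}$ out of the Gibbs bracket. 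This reduces the expression to
\[
\mathbb{E}\big\langle [H_{n;p}(\phi;\xi)]^2 \big\rangle_{\boldsymbol{0}}
=
\frac{1}{\vert V_n\vert^{p-1}}
\sum_{\mathbf{x}} \mathbb{E}\bigl\langle \phi_{x_1}^2\cdots\phi_{x_p}^2\bigr\rangle_{\boldsymbol{0}}.
\]

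Next, since $\phi_{x_1}^2\cdots\phi_{x_p}^2$ is a product of $p$ nonnegative factors, I would apply the generalized H\"older inequality, first with respect to the Gibbs measure and then with respect to $\mathbb{E}$, to obtain
\[
\mathbb{E}\bigl\langle \phi_{x_1}^2\cdots\phi_{x_p}^2\bigr\rangle_{\boldsymbol{0}}
\leqslant
\prod_{l=1}^{p}\Bigl(\mathbb{E}\bigl\langle \phi_{x_l}^{2p}\bigr\rangle_{\boldsymbol{0}}\Bigr)^{1/p}.
\]
Proposition \ref{Itoi-bound}, applied with the even integer $k=2p$, bounds each factor by $C_{2p}$, so every summand is at most $C_{2p}$. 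Since the sum has $\vert V_n\vert^{p}$ terms, the prefactor $\vert V_n\vert^{-(p-1)}$ leaves exactly $\vert V_n\vert C_{2p}$, which is the claimed bound.

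There is no genuine obstacle here; the one step to handle with care is the justification that $\mathbb{E}_\xi$ commutes with $\langle\,\cdot\,\rangle_{\boldsymbol{0}}$, which requires noting that $Z_{n;\boldsymbol{0}}$ and the integrand defining the $\boldsymbol{0}$-Gibbs measure contain no $\xi$-dependence. Once that observation is made, the rest is the H\"older step plus direct counting against Proposition \ref{Itoi-bound}.
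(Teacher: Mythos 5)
Your proposal is correct and follows essentially the same route as the paper: average over $\xi$ first via Fubini (using $\widehat{\mathbb{E}}\xi_{\mathbf{x}}\xi_{\mathbf{y}}=\delta_{\mathbf{x},\mathbf{y}}$ to collapse to the diagonal), then the generalized H\"older inequality and Proposition \ref{Itoi-bound} with $k=2p$, and finally the counting of terms against the prefactor $\vert V_n\vert^{-(p-1)}$. Your version merely spells out the off-diagonal cancellation and the term count more explicitly than the paper does.
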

\begin{proof}
	Let $\widehat{\mathbb{E}}$ be the expectation over $\xi_{\bf x}$,  $p\geqslant 2$. By Fubini's theorem we have
	\begin{align*}
	\mathbb{E}\big\langle [H_{n;p}(\phi;\xi)]^2 \big\rangle_{\boldsymbol{0}}
	=
	\mathbb{E}\big\langle \widehat{\mathbb{E}}[H_{n;p}(\phi;\xi)]^2 \big\rangle_{\boldsymbol{0}}
	=
	\mathbb{E}\biggl\langle
	{1\over\vert V_n\vert^{p-1}}
	\sum_{{\bf x}}  \phi_{\bf x}^2
	\biggr\rangle_{\boldsymbol{0}},
	\end{align*}
	where in the last equality we used that $\xi=(\xi_{\bf x})$
	consists of i.i.d. real-valued random variables with $\widehat{\mathbb{E}}\xi_{\bf x}=0$ and $\widehat{\mathbb{E}}\xi_{\bf x}^2=1$. By using Hölder's inequality and Proposition \ref{Itoi-bound}, the above expression is at most
	\begin{align*}
	{1\over\vert V_n\vert^{p-1}}
	\sum_{{\bf x}} 
	\prod_{j=1}^{p}
	\mathbb{E}^{1/p}
	\langle \phi_{x_j}^{2p} \,
	\rangle_{\boldsymbol{0}}
	\leqslant
	\vert V_n\vert C_{2p}.
	\end{align*}
	Then the proof follows.
\end{proof}

Our second tool will be the following lemma.
This result plays a fundamental role in our work, because it allows us to relate the averages of both Gibbs expectations corresponding to (classic) random field Ginzburg-Landau model (when $\boldsymbol{\alpha}=\boldsymbol{0}$) and to our model \eqref{gibbs-measure2}.
\begin{lemma1}\label{prop-fund}
	Let $F(\phi):\mathbb{R}^{V_n} \to \mathbb{R}$ be a non-negative Borel function. There is a positive constant $\widetilde{C}=\widetilde{C}(n)$, with $\widetilde{C}\nto 1$ as $n\to\infty$, so that
	\begin{align*}
	\mathbb{E}\langle  F(\phi)\rangle_{\boldsymbol{\alpha}}
	\leqslant
	\widetilde{C}\, 
	\mathbb{E}^{1/2}
	\langle  F^2(\phi) \,
	\rangle_{\boldsymbol{0}}.
	\end{align*}
\end{lemma1}
\begin{proof}
	A straightforward computation shows that
	\begin{align*}
	\mathbb{E}\langle F(\phi)\rangle_{\boldsymbol{\alpha}}
	&=
	\mathbb{E} \,
	{Z_{n;\boldsymbol{0}}\over Z_{n;\boldsymbol{\alpha}}} \, 
	\big\langle F(\phi) \,
	{\exp}\big({H^{\rm per}_{n;\boldsymbol{\alpha}}(\phi;\xi)}\big)
	\big\rangle_{\boldsymbol{0}}
	\\[0,15cm]
	&=
	\mathbb{E} \,
	{
		\big\langle F(\phi) \,
		{\exp}\big({H^{\rm per}_{n;\boldsymbol{\alpha}}(\phi;\xi)}\big)
		\big\rangle_{\boldsymbol{0}}
		\over 
		\big\langle
		{\exp}\big({H^{\rm per}_{n;\boldsymbol{\alpha}}(\phi;\xi)}\big)
		\big\rangle_{\boldsymbol{0}}
	}.
	\end{align*}
	By using Jensen and  Cauchy-Schwarz inequalities, the above expression is
	\begin{align*}
	&\leqslant
	\mathbb{E} \,
	{
		\big\langle  F(\phi) \,
		{\exp}\big({H^{\rm per}_{n;\boldsymbol{\alpha}}(\phi;\xi)}\big)
		\big\rangle_{\boldsymbol{0}}
		\over 
		{\exp}
		\big\langle
		\big({H^{\rm per}_{n;\boldsymbol{\alpha}}(\phi;\xi)}\big)
		\big\rangle_{\boldsymbol{0}}
	} 
	\\[0,2cm]
	&=
	\mathbb{E} \,
	\Big\langle  F(\phi) \,
	{\exp}\Big(
	{H^{\rm per}_{n;\boldsymbol{\alpha}}(\phi;\xi)}
	-
	\big\langle
	{H^{\rm per}_{n;\boldsymbol{\alpha}}(\phi;\xi)}
	\big\rangle_{\boldsymbol{0}}
	\Big)
	\Big\rangle_{\boldsymbol{0}} 
	\\[0,2cm]
	&\leqslant
	\mathbb{E}^{1/2}
	\langle  F^2(\phi) \,
	\rangle_{\boldsymbol{0}}
	\,
	\mathbb{E}^{1/2} 
	\Big\langle
	{\exp}\Big(
	2\big[
	{H^{\rm per}_{n;\boldsymbol{\alpha}}(\phi;\xi)}
	-
	\big\langle
	{H^{\rm per}_{n;\boldsymbol{\alpha}}(\phi;\xi)}
	\big\rangle_{\boldsymbol{0}}
	\big]
	\Big)
	\Big\rangle_{\boldsymbol{0}}.
	\end{align*}
	Therefore,
	\begin{align} \label{main-relation}
	\mathbb{E}\langle F(\phi)\rangle_{\boldsymbol{\alpha}}
	\leqslant
	\mathbb{E}^{1/2}
	\langle  F^2(\phi) \,
	\rangle_{\boldsymbol{0}}
	\,
	\mathbb{E}^{1/2} 
	\Big\langle
	{\exp}\Big(
	2\big[
	{H^{\rm per}_{n;\boldsymbol{\alpha}}(\phi;\xi)}
	-
	\big\langle
	{H^{\rm per}_{n;\boldsymbol{\alpha}}(\phi;\xi)}
	\big\rangle_{\boldsymbol{0}}
	\big]
	\Big)
	\Big\rangle_{\boldsymbol{0}}.
	\end{align}
	
	On the other hand, let 
	\begin{align}\label{def-X}
	X_n\coloneqq (1-\delta_{\boldsymbol{\alpha},\boldsymbol{0}})\sum_{p=2}^\infty \alpha_p 2^{-p}
	H_{n;p}(\phi;\xi).
	\end{align} 
	Note that, ${H^{\rm per}_{n;\boldsymbol{\alpha}}(\phi;\xi)}=i c_n X_n$.
	By triangle inequality, 
	$
	\mathbb{E} 
	\big\langle
	\big\vert
	X_n
	-
	\langle
	X_n
	\rangle_{\boldsymbol{0}}
	\big\vert
	\big\rangle_{\boldsymbol{0}}
	\leqslant
	2
	\mathbb{E} 
	\big\langle
	\vert
	X_n
	\vert
	\big\rangle_{\boldsymbol{0}},
	$
	with
	\begin{align}\label{finite-exp}
	\mathbb{E}
	\big\langle	
	\vert X_n\vert
	\big\rangle_{\boldsymbol{0}} 
	&\leqslant
	(1-\delta_{\boldsymbol{\alpha},\boldsymbol{0}})
	\sum_{p=2}^\infty \alpha_p 2^{-p}
	\mathbb{E}
	\big\langle	
	\vert H_{n;p}(\phi;\xi) \vert 
	\big\rangle_{\boldsymbol{0}} \nonumber
	\\[0,15cm]
	&\leqslant
	\sum_{p=2}^\infty \alpha_p 2^{-p}
	\mathbb{E}^{1/2}
	\big\langle	
	[H_{n;p}(\phi;\xi)]^2 
	\big\rangle_{\boldsymbol{0}} \nonumber
	\\[0,15cm]
	&\leqslant
	\sqrt{\vert V_n\vert} \sum_{p=2}^\infty \alpha_p 2^{-p} C_{2p}^{1/2}
	<\infty,
	\end{align} 
	where in the second inequality we use Lyapunov's inequality and in the third one the Proposition \ref{prop-bound-H} with  the regularity condition \eqref{regularity condition}.

	Consequently, since $\mathbb{E} 
	\big\langle
	\big\vert
	X_n
	-
	\langle
	X_n
	\rangle_{\boldsymbol{0}}
	\big\vert
	\big\rangle_{\boldsymbol{0}}<\infty$, it follows that,  for $n\to\infty$,
	\begin{align}\label{stat-2}
	\mathbb{E}
	\Big\langle
	{\exp}\Big(
	2\big[
	{H^{\rm per}_{n;\boldsymbol{\alpha}}(\phi;\xi)}
	-
	\big\langle
	{H^{\rm per}_{n;\boldsymbol{\alpha}}(\phi;\xi)}
	\big\rangle_{\boldsymbol{0}}
	\big]
	\Big)
	\Big\rangle_{\boldsymbol{0}} \nonumber
	&=
	\mathbb{E}
	\big\langle 
	{\exp}\big(2ic_n [X_n - \langle X_n\rangle_{\boldsymbol{0}}]\big)
	\big\rangle_{\boldsymbol{0}} 
	\\[0,2cm]
	&=
	1+2ic_n \mathbb{E}
	\langle X_n- \langle X_n\rangle_{\boldsymbol{0}} \rangle_{\boldsymbol{0}} 
	+
	2ic_n o(1)
	\nonumber
	\\[0,2cm]
	&=
	1+2ic_n o(1).
	\end{align}
	
	By combining \eqref{main-relation} with \eqref{stat-2}, the proof follows by taking $\widetilde{C}\coloneqq1+2ic_n o(1)$.
\end{proof}

The next result is an extension of Proposition \ref{Itoi-bound} when asymptotically vanishing perturbations are added to the random field Ginzburg-Landau model.
\begin{lemma1}\label{Lemma2}
	The expectation of the functions of the spin variables $\phi_{ x}$ and $\phi_{\bf x}$ have the following upper bounds:
	\begin{itemize}
		\item[(i)]
		$\mathbb{E}\langle\phi_{ x}^k\rangle_{\boldsymbol{\alpha}}
		\leqslant
		\widetilde{C} C_{2k}^{1/2}$, 
		\ \text{for all} \ ${x}\in V_n$;
		\item[(ii)]
		$\mathbb{E}\langle\phi_{\bf x}^k\rangle_{\boldsymbol{\alpha}}
		\leqslant
		\widetilde{C} C_{2kp}^{1/2}$ 
		and 
		
		$\mathbb{E}\langle\phi_{\bf x}^{2k}\rangle_{\boldsymbol{0}}
		\leqslant
		C_{2kp},
		$
		\ \text{for all} \ ${\bf x}\in \otimes_{l=2}^p V_n$, $p\geqslant 2$;
	\end{itemize}
	where $k$ is an arbitrary even integer.
\end{lemma1}
\begin{proof}
	Taking $F(\phi)=\phi_{ x}^k$ in Lemma \ref{prop-fund},	we have
	$
	\mathbb{E}\langle \phi_{ x}^k\rangle_{\boldsymbol{\alpha}}
	\leqslant
	\widetilde{C}\, 	
	\mathbb{E}^{1/2} 
	\langle \phi_{ x}^{2k} \,
	\rangle_{\boldsymbol{0}}.
	$
	By combining this inequality with Proposition \ref{Itoi-bound}, the proof of Item (i) follows.
	
	On the other hand, taking $F(\phi)=\phi_{\bf x}^k$ in Lemma \ref{prop-fund}, by Hölder's inequality, we get
	\begin{align*}
	\mathbb{E}\langle \phi_{\bf x}^k\rangle_{\boldsymbol{\alpha}}
	\leqslant
	\widetilde{C}\, 	
	\mathbb{E}^{1/2} 
	\langle \phi_{\bf x}^{2k} \,
	\rangle_{\boldsymbol{0}}
	\leqslant
	\widetilde{C}\, 
	\prod_{j=1}^{p}
	\mathbb{E}^{1/2p}
	\langle \phi_{x_j}^{2kp} \,
	\rangle_{\boldsymbol{0}}.
	\end{align*}
	
	By combining the above inequalities with Proposition \ref{Itoi-bound}, the proof of Item (ii) follows.
	%
	%
\end{proof}

The next result guarantees that the Hamiltonian 
$H_{n;\boldsymbol{\alpha}}(\phi)$ in \eqref{ham-1} is well-defined almost surely.
\begin{proposition}\label{prop-def-ham}
	The perturbing Hamiltonian $H^{\rm per}_{n;\boldsymbol{\alpha}}(\phi;\xi)$ in \eqref{Function-H}
	converges almost surely.
\end{proposition}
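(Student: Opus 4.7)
The goal is to establish almost-sure convergence of the infinite sum
\[
\sum_{p=2}^\infty \alpha_p 2^{-p} H_{n;p}(\phi;\xi),
\]
since the prefactor $i(1-\delta_{\boldsymbol{\alpha},\boldsymbol{0}}) c_n$ multiplying this sum in \eqref{Function-H} is harmless. My plan is the standard one: control the expected sum of absolute values and conclude that the series is absolutely convergent almost surely. Essentially all of the required analytical ingredients are already assembled in the proof of Lemma \ref{prop-fund} (namely, the chain \eqref{finite-exp}); the task here is to repackage them so that the conclusion is convergence rather than an $L^1$-bound on a presupposed sum.

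First, apply Tonelli's theorem to the joint expectation $\mathbb{E}\langle \cdot \rangle_{\boldsymbol{0}}$, which is legitimate because every summand is non-negative, to write
\[
\mathbb{E}\Big\langle \sum_{p=2}^\infty |\alpha_p|\, 2^{-p}\, |H_{n;p}(\phi;\xi)| \Big\rangle_{\boldsymbol{0}}
=
\sum_{p=2}^\infty |\alpha_p|\, 2^{-p}\, \mathbb{E}\langle |H_{n;p}(\phi;\xi)| \rangle_{\boldsymbol{0}}.
\]
Next, use Lyapunov's inequality to dominate each $L^1$-expectation by the corresponding $L^2$-expectation, then apply Proposition \ref{prop-bound-H} to bound $\mathbb{E}\langle [H_{n;p}(\phi;\xi)]^2\rangle_{\boldsymbol{0}} \leqslant |V_n| C_{2p}$. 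Combining with $|\alpha_p|\leqslant 1$ and the regularity condition \eqref{regularity condition} yields
\[
\mathbb{E}\Big\langle \sum_{p=2}^\infty |\alpha_p|\, 2^{-p}\, |H_{n;p}(\phi;\xi)| \Big\rangle_{\boldsymbol{0}}
\leqslant
\sqrt{|V_n|}\sum_{p=2}^\infty |\alpha_p|\, 2^{-p}\, C_{2p}^{1/2}
<\infty.
\]
Since a non-negative random variable with finite expectation is finite almost surely, the series converges absolutely a.s.\ under the joint law of $(g,\xi)$ combined with the reference Gibbs measure $G_{n;\boldsymbol{0}}$.

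Because $G_{n;\boldsymbol{0}}$ is absolutely continuous with respect to Lebesgue measure on $\mathbb{R}^{V_n}$, Fubini's theorem then says that for almost every realization of the disorders $(g,\xi)$, the series converges absolutely for Lebesgue-almost every $\phi$, which is the sense in which $H^{\rm per}_{n;\boldsymbol{\alpha}}(\phi;\xi)$ is well-defined almost surely. The only real subtlety—the closest thing to an obstacle—is this last bookkeeping step: the word "almost surely" refers to the joint $(\xi,\phi)$-space rather than to $\xi$ alone, and one must verify that this is precisely what is needed in order that the Gibbs measure \eqref{gibbs-measure2} be well-defined. Since the Gibbs measure only sees $\phi$ through its density, no pointwise-in-$\phi$ definition of the Hamiltonian is required, and the argument closes.
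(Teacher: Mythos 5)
Your proof is correct, and it reaches the same quantitative bound as the paper (Lyapunov's inequality, Proposition \ref{prop-bound-H}, and the regularity condition \eqref{regularity condition} combine to give $\sqrt{|V_n|}\sum_p \alpha_p 2^{-p} C_{2p}^{1/2}<\infty$), but the logical packaging is genuinely different and, in one respect, cleaner. The paper invokes Lemma 3.6$''$ of Lo\`eve and reduces the claim to $\sum_{p}\alpha_p 2^{-p}\,\mathbb{E}\langle |H_{n;p}(\phi;\xi)|\rangle_{\boldsymbol{\alpha}}<\infty$, which it then bounds by passing through Lemma \ref{prop-fund}; that is, the paper's argument is phrased entirely under the \emph{perturbed} expectation $\langle\cdot\rangle_{\boldsymbol{\alpha}}$. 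This is somewhat circular: $\langle\cdot\rangle_{\boldsymbol{\alpha}}$ and $Z_{n;\boldsymbol{\alpha}}$ are defined in terms of $H^{\rm per}_{n;\boldsymbol{\alpha}}$, whose very well-definedness is the content of the proposition. You instead work under the unperturbed expectation $\mathbb{E}\langle\cdot\rangle_{\boldsymbol{0}}$, which is a bona fide probability measure independent of the object being constructed, and you replace the appeal to Lo\`eve's lemma by the elementary chain Tonelli $\Rightarrow$ finite expectation $\Rightarrow$ finite a.s.\ $\Rightarrow$ absolute convergence a.s. Your closing remark about the meaning of ``almost surely'' (joint in $(\xi,\phi)$, then disintegrated via Fubini using the equivalence of $G_{n;\boldsymbol{0}}$ with Lebesgue measure) is a point the paper leaves implicit and is worth stating. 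The only cost of your route is none at all mathematically; it simply forgoes reusing Lemma \ref{prop-fund}. One very minor shared caveat: both you and the paper implicitly read the regularity condition \eqref{regularity condition} with $|\alpha_p|$ in place of $\alpha_p$ when the $\alpha_p$ may be negative.
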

\begin{proof}
	By Lemma 3.6$''$ of \cite{Loeve-1951} and by definition of $H^{\rm per}_{n;\boldsymbol{\alpha}}(\phi;\xi)$,  it is enough to verify that
	\begin{align}\label{object}
	\sum_{p=2}^\infty
	\alpha_p 2^{-p}
	\mathbb{E}
	\langle
	\vert H_{n;p}(\phi;\xi)\vert
	\rangle_{\boldsymbol{\alpha}}
	<\infty.
	\end{align}
	
	Indeed,  taking $F(\phi)=\vert H_{n;p}(\phi;\xi)\vert$ in Lemma \ref{prop-fund}, by Proposition \ref{prop-bound-H},	we have
	\begin{align*}
	\mathbb{E}\langle \vert H_{n;p}(\phi;\xi)\vert\rangle_{\boldsymbol{\alpha}}
	\leqslant
	\widetilde{C} \, 
	\mathbb{E}^{1/2}
	\langle  [H_{n;p}(\phi;\xi)]^2 \,
	\rangle_{\boldsymbol{0}}
	\leqslant
	\widetilde{C}\sqrt{\vert V_n\vert} C_{2p}^{1/2}.
	\end{align*}
	Multiplying by $\alpha_p 2^{-p}$ and then summing  on $p\geqslant 2$ in the above inequalities, by regularity condition \eqref{regularity condition}, the statement in \eqref{object} follows.
	
	We thus complete the proof.
\end{proof}

\subsection{Properties of the Free energy}
For each $(\beta,h,u,r)\in (0,\infty)^3\times \mathbb{R}$, let 
\begin{align}\label{log}
\begin{array}{lll}
\psi_{n;\boldsymbol{\alpha}}= \psi_{n;\boldsymbol{\alpha}}(\beta, h,u,r) \coloneqq \dfrac{\log Z_{n;\boldsymbol{\alpha}}}{|V_n|};
&
& p_{n;\boldsymbol{\alpha}}=p_{n;\boldsymbol{\alpha}}(\beta, h,u,r)\coloneqq\mathbb{E}\psi_{n;\boldsymbol{\alpha}};
\end{array}
\end{align}
be the free energy and the average free energy, respectively.

The following result states that 
the perturbation term has a small influence in the computation of the expectation of the free energy $\psi_{n;\boldsymbol{\alpha}}$.
\begin{lemma1}\label{for-to-prove-lemma}
	We have
	\begin{align}\label{for-to-prove}
	p_{n;\boldsymbol{0}}
	\leqslant
	p_{n;\boldsymbol{\alpha}}
	\leqslant
	p_{n;\boldsymbol{0}}
	+o(1),
	\end{align}
	where $p_{n;\boldsymbol{\alpha}}$ is the average free energy given in \eqref{log}.
\end{lemma1}
\begin{proof}
	A simple computation shows that
	\begin{align} \label{key-1}
	\mathbb{E}\log Z_{n;\boldsymbol{\alpha}}
	&
	\stackrel{\eqref{partition-function}}{=}
	\mathbb{E}\log \int_{\mathbb{R}^{|V_n|}}  
	\exp\big(-	
	H_n(\phi;g)+H^{\rm per}_{n;\boldsymbol{\alpha}}(\phi;\xi)
	\big) \, {\rm D}\phi \nonumber
	\\[0,15cm]
	&=
	\mathbb{E}\log 
	Z_{n;\boldsymbol{0}}
	\big\langle 
	{\exp}\big({H^{\rm per}_{n;\boldsymbol{\alpha}}(\phi;\xi)}\big)
	\big\rangle_{\boldsymbol{0}}	
	\nonumber
	\\[0,15cm]
	&\leqslant
	\mathbb{E}\log 
	Z_{n;\boldsymbol{0}}
	+
	\mathbb{E}
	\big\langle 
	{\exp}\big({H^{\rm per}_{n;\boldsymbol{\alpha}}(\phi;\xi)}\big)
	\big\rangle_{\boldsymbol{0}}
	-1,
	\end{align}
	where in the last line we used the inequality $\log(x)\leqslant x-1$ for $x>0$.  
	
	Let $X_n$ be as in \eqref{def-X} and let $\widehat{\mathbb{E}}$ be the expectation over $\xi_{\bf x}$,  $p\geqslant 2$. Note that, ${H^{\rm per}_{n;\boldsymbol{\alpha}}(\phi;\xi)}=i c_n X_n$, and by \eqref{finite-exp}, that
	$
	\mathbb{E}
	\langle	
	\vert X_n\vert
	\rangle_{\boldsymbol{0}}
	<\infty.
	$
	Consequently, for $n\to\infty$, it follows that
	\begin{align}\label{ineq-Mink}
	\mathbb{E}
	\big\langle 
	{\exp}\big({H^{\rm per}_{n;\boldsymbol{\alpha}}(\phi;\xi)}\big)
	\big\rangle_{\boldsymbol{0}} 
	=
	\mathbb{E}
	\big\langle 
	{\exp}(ic_n X_n)
	\big\rangle_{\boldsymbol{0}} 
	&=
	1+ic_n \mathbb{E}
	\langle X_n\rangle_{\boldsymbol{0}} 
	+
	ic_n o(1)
	\nonumber
	\\[0,2cm]
	&=
	1+\mathbb{E}\langle {H^{\rm per}_{n;\boldsymbol{\alpha}}(\phi;\xi)}\rangle_{\boldsymbol{0}} 
	+
	ic_n o(1)
	\nonumber
	\\[0,2cm]
	&=
	1+\mathbb{E}
	\big\langle { \widehat{\mathbb{E}} H^{\rm per}_{n;\boldsymbol{\alpha}}(\phi;\xi)} \big\rangle_{\boldsymbol{0}} 
	+
	ic_n o(1)
	\nonumber
	\\[0,2cm]
	&=
	1+ic_n o(1),
	\end{align}
	where in the third equality we use Fubini's theorem to justify the change in order of expectations and in the last one that $\widehat{\mathbb{E}}H^{\rm per}_{n;\boldsymbol{\alpha}}(\phi;\xi) =0$.
	
	By \eqref{key-1} and \eqref{ineq-Mink}, we get
	\begin{align*}
	\mathbb{E}\log Z_{n;\boldsymbol{\alpha}}
	\leqslant
	\mathbb{E}\log 
	Z_{n;\boldsymbol{0}}
	+
	ic_n o(1).
	\end{align*}
	
	Dividing the above inequality by $\vert V_n\vert$ and using the definition in \eqref{log} of $p_{n;\boldsymbol{\alpha}}$, the proof of the upper bound \eqref{for-to-prove} of the average free energy $p_{n;\boldsymbol{\alpha}}$  follows.

	\bigskip 
	Now we get the lower bound of $p_{n;\boldsymbol{\alpha}}$ in \eqref{for-to-prove}. Indeed, a simple computation shows that
	\begin{align}
	{1\over \vert V_n\vert}\, \log Z_{n;\boldsymbol{\alpha}}
	-
	{1\over \vert V_n\vert}\, \log Z_{n;\boldsymbol{0}}
	&\stackrel{\eqref{partition-function}}{=}
	{1\over \vert V_n\vert}\, \log
	{\int_{\mathbb{R}^{|V_n|}}  
		\exp\big(-	
		H_n(\phi;g)+H^{\rm per}_{n;\boldsymbol{\alpha}}(\phi;\xi)
		\big) \, {\rm D}\phi	
		\over 
		\int_{\mathbb{R}^{|V_n|}}  
		\exp\big(-	H_n(\phi;g)\big) \, {\rm D}\phi
	}
	\nonumber
	\\[0,15cm]
	&=
	{1\over \vert V_n\vert}\, \log
	\big\langle 
	\exp\big(H^{\rm per}_{n;\boldsymbol{\alpha}}(\phi;\xi)
	\big) 
	\big\rangle_{\boldsymbol{0}} 	\nonumber
	\\[0,15cm]
	&\geqslant 
	{1\over \vert V_n\vert}\, 
	\big\langle { H^{\rm per}_{n;\boldsymbol{\alpha}}(\phi;\xi)} \big\rangle_{\boldsymbol{0}},
	\nonumber
	\end{align}
	where in the above inequality we used Jensen's inequality.
	Taking the expectation from both sides of the above inequality,
	we obtain 
	\begin{align*}
	p_{n;\boldsymbol{\alpha}}-p_{n;\boldsymbol{0}}
	\geqslant
	{1\over \vert V_n\vert}\, 
	\mathbb{E}
	\big\langle { H^{\rm per}_{n;\boldsymbol{\alpha}}(\phi;\xi)} \big\rangle_{\boldsymbol{0}}
	=
	{1\over \vert V_n\vert}\, 
	\mathbb{E}
	\big\langle { \widehat{\mathbb{E}} H^{\rm per}_{n;\boldsymbol{\alpha}}(\phi;\xi)} \big\rangle_{\boldsymbol{0}}
	= 0,
	\end{align*}	
	because $\widehat{\mathbb{E}}H^{\rm per}_{n;\boldsymbol{\alpha}}(\phi;\xi) =0$. Thus, we have complete the proof of lemma.
\end{proof}

\begin{lemma1}\label{lemma-exist-p}
	For every $(\beta, h, u, r)\in(0,\infty)^3\times \mathbb{R}$, the limit, defined in \eqref{log},
	\begin{align*}
	p_{\boldsymbol{\alpha}}
	=
	\lim_{n\to\infty}
	p_{n;\boldsymbol{\alpha}}(\beta, h,u,r)
	\end{align*}
	exists and is finite.
\end{lemma1}
\begin{proof}
	From Lemma \ref{for-to-prove-lemma}, $p_{\boldsymbol{\alpha}}
	=\lim_{n\to\infty}
	p_{n;\boldsymbol{0}}(\beta, h,u,r)$. But, it is known that $\lim_{n\to\infty}p_{n;\boldsymbol{0}}$ exists and is finite for every $(\beta, h, u, r)\in(0,\infty)^3\times \mathbb{R}$; see, e.g., Lemma II.2 of \cite{Itoi-2019}. This completes the proof.
\end{proof}

\begin{lemma1}\label{convexity}
	The limit $p_{\boldsymbol{\alpha}}$ is a convex function
	of $h$ for every fixed $\beta$, $u$ and $r$. The same statement is true for $\psi_{n;\boldsymbol{\alpha}}$ and $p_{n;\boldsymbol{\alpha}}$.
\end{lemma1}
\begin{proof}
	The proof is trivial and omitted.
\end{proof}

\subsection{Upper bound of Variance of Free energy}
%

Let $g=(g_x)$ and $g'=(g_x')$ be two disorders  consisting of i.i.d. standard Gaussian random variables.
For each $s\in[0,1]$, we define a new random field $G=(G_{x})$ as follows:
\begin{align*}
G_x\coloneqq \sqrt{s}\,g_x+\sqrt{1-s}\,g_x', \quad x\in V_n.
\end{align*}
Additionally, similarly to \cite{Itoi-2019}, we consider the following generating function:
\begin{align*}
\gamma_{n;\boldsymbol{\alpha}}(s)
\coloneqq 
\mathbb{E}\big[\mathbb{E}' \widehat{\mathbb{E}} \psi_{n;\boldsymbol{\alpha}}(G)\big]^2
=
\mathbb{E}\left[\mathbb{E}' \widehat{\mathbb{E}}
{1\over \vert V_n\vert}\,
\log
Z_{n;\boldsymbol{\alpha}}(G)
\right]^2,
\end{align*}
with
\begin{align*}
Z_{n;\boldsymbol{\alpha}}(G)\coloneqq \int_{\mathbb{R}^{|V_n|}}  
\exp\big(-	
H_n(\phi;G)
+
H^{\rm per}_{n;\boldsymbol{\alpha}}(\phi;\xi)
\big) \, {\rm D}\phi,
\end{align*}
where
$H_n(\phi;G)$ and $H^{\rm per}_{n;\boldsymbol{\alpha}}(\phi;\xi)$ are as in \eqref{Hamiltonian-g} and \eqref{Function-H}, respectively. Here,
$\mathbb{E}$ and $\mathbb{E}'$ denote expectation over $g$ and $g'$, respectively, and $\widehat{\mathbb{E}}$ is the expectation over $\xi_{\bf x}$, $p\geqslant 2$.
\begin{proposition}\label{bound-var-1}
	For any $(\beta, h, u, r)\in(0,\infty)^3\times \mathbb{R}$, any $k\geqslant 1$ integer and any $s\in[0,1]$, the $k$-th order derivative of $\gamma_{n;\boldsymbol{\alpha}}$, denoted by $\gamma^{(k)}_{n;\boldsymbol{\alpha}}$, is written as follows
	\begin{align*}
	\gamma^{(k)}_{n;\boldsymbol{\alpha}}(s)
	=
	\sum_{x_1,\ldots,x_k \in V_n}
	\mathbb{E}
	\left(
	\mathbb{E}' \widehat{\mathbb{E}} \,
	{\partial^k \psi_{n;\boldsymbol{\alpha}} \over \partial G_{x_k}\cdots  G_{x_1} } (G(s))
	\right)^2.
	\end{align*}
	Furthermore, for any $s_0\in[0,1)$,
	\begin{align*}
	&\gamma^{(k)}_{n;\boldsymbol{\alpha}}(s_0)
	\leqslant
	{(k-1)!\over (1-s_0)^{k-1}}\, {h^2
		\widetilde{C} C_{4}^{1/2}\over\vert V_n\vert};
	\\[0,1cm]
	&
	\gamma'_{n;\boldsymbol{\alpha}}(1)
	\leqslant
	{h^2 \widetilde{C} C_{4}^{1/2}\over\vert V_n\vert}.
	\end{align*}
\end{proposition}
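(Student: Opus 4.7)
The plan is to proceed in three steps: (i) derive the displayed formula for $\gamma^{(k)}_{n;\boldsymbol{\alpha}}(s)$ via iterated Gaussian integration by parts; (ii) bound $\gamma'_{n;\boldsymbol{\alpha}}(s)$ uniformly in $s$ using Lemma \ref{Lemma2}; and (iii) deduce the higher-derivative bound from the non-negativity of all derivatives of $\gamma'_{n;\boldsymbol{\alpha}}$.

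For (i), I would set $F(g,s):=\mathbb{E}'\widehat{\mathbb{E}}\,\psi_{n;\boldsymbol{\alpha}}(G(s))$, so that $\gamma_{n;\boldsymbol{\alpha}}(s)=\mathbb{E}[F^2]$, and then compute $\gamma'_{n;\boldsymbol{\alpha}}(s)=2\mathbb{E}[F\,\partial_s F]$ using the chain rule together with $\partial_s G_x = g_x/(2\sqrt{s})-g'_x/(2\sqrt{1-s})$. First I would apply Stein's identity to the $g'_x$ factor to convert $\mathbb{E}'[g'_x\,\partial\psi_{n;\boldsymbol{\alpha}}/\partial G_x]$ into $\sqrt{1-s}\,\mathbb{E}'[\partial^2\psi_{n;\boldsymbol{\alpha}}/\partial G_x^2]$, which absorbs the $1/\sqrt{1-s}$ prefactor; then apply Stein's identity to the $g_x$ factor in the surviving term. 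The ``diagonal'' pieces involving $\mathbb{E}[FK_x]$, with $K_x := \mathbb{E}'\widehat{\mathbb{E}}\,\partial^2\psi_{n;\boldsymbol{\alpha}}/\partial G_x^2$, then appear with opposite signs and cancel, leaving
\[
\gamma'_{n;\boldsymbol{\alpha}}(s) = \sum_{x\in V_n} \mathbb{E}\Big[\Big(\mathbb{E}'\widehat{\mathbb{E}}\,\frac{\partial \psi_{n;\boldsymbol{\alpha}}}{\partial G_x}(G(s))\Big)^2\Big].
\]
Since each summand has the same structural form as $\gamma_{n;\boldsymbol{\alpha}}$ but with $\psi_{n;\boldsymbol{\alpha}}$ replaced by $\partial \psi_{n;\boldsymbol{\alpha}}/\partial G_x$, iterating the same argument $k-1$ more times yields the $k$-th derivative formula.

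For (ii), a direct computation from \eqref{partition-function} gives $\partial \psi_{n;\boldsymbol{\alpha}}/\partial G_x = (h/|V_n|)\langle \phi_x\rangle_{\boldsymbol{\alpha}}$, with the Gibbs bracket evaluated at the interpolated disorder $G(s)$. Two applications of Jensen's inequality---first inside $\mathbb{E}'\widehat{\mathbb{E}}$ and then inside the Gibbs bracket---yield $(\mathbb{E}'\widehat{\mathbb{E}}\,\langle \phi_x\rangle_{\boldsymbol{\alpha}})^2 \leq \mathbb{E}'\widehat{\mathbb{E}}\,\langle \phi_x^2\rangle_{\boldsymbol{\alpha}}$. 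Since $G(s)$ is standard Gaussian in distribution for every $s$, the triple expectation $\mathbb{E}\mathbb{E}'\widehat{\mathbb{E}}\,\langle \phi_x^2\rangle_{\boldsymbol{\alpha}}$ coincides with $\mathbb{E}\,\langle \phi_x^2\rangle_{\boldsymbol{\alpha}}$ taken over the original disorder, which is at most $\widetilde{C}C_4^{1/2}$ by Lemma \ref{Lemma2}(i) applied with $k=2$. Summing over $x \in V_n$ gives $\gamma'_{n;\boldsymbol{\alpha}}(s) \leq h^2\widetilde{C}C_4^{1/2}/|V_n| =: M$ uniformly in $s\in[0,1]$; specializing to $s=1$ yields the second displayed inequality of the proposition.

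For (iii), the formula from (i) exhibits $\gamma^{(k)}_{n;\boldsymbol{\alpha}}(s)$ as a sum of squares for every $k\geq 1$, so $f := \gamma'_{n;\boldsymbol{\alpha}}$ has all derivatives non-negative on $[0,1)$ while $f \leq M$ by (ii). Taylor's theorem with integral remainder around $s_0$ then gives, for $s'\in (s_0,1)$, that $f(s')$ equals a sum of non-negative Taylor terms of order at most $k-2$ plus $R(s_0,s'):=\int_{s_0}^{s'}(s'-t)^{k-2}/(k-2)!\cdot f^{(k-1)}(t)\,dt$. Since $f^{(k-1)}$ is non-decreasing (because $f^{(k)}\geq 0$), $R(s_0,s')\geq f^{(k-1)}(s_0)(s'-s_0)^{k-1}/(k-1)!$, so $f^{(k-1)}(s_0)\leq (k-1)! f(s')/(s'-s_0)^{k-1}\leq (k-1)! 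M/(s'-s_0)^{k-1}$. Letting $s'\to 1^-$ produces the asserted bound. The principal technical obstacle is the Gaussian integration-by-parts computation in (i): one must verify carefully that the apparent $1/\sqrt{s}$ and $1/\sqrt{1-s}$ singularities are absorbed by Stein's identity and that the two ``diagonal'' occurrences of $\mathbb{E}[FK_x]$ cancel exactly; the interchange of $\partial_s$ with the expectations $\mathbb{E}$, $\mathbb{E}'$, $\widehat{\mathbb{E}}$ and the Gibbs integral is routine given the Gaussian regularity and the moment estimates of Propositions \ref{Itoi-bound} and \ref{prop-bound-H}.
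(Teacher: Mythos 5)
Your proposal is correct and follows essentially the same route as the paper, which simply invokes the interpolation argument of Lemma II.5 in \cite{Itoi-2019} together with Lemma \ref{Lemma2}(i): the iterated Gaussian integration by parts giving the sum-of-squares representation, the Jensen-plus-moment bound on $\gamma'_{n;\boldsymbol{\alpha}}$, and the Taylor/monotonicity step are exactly the omitted details. The only caveat you share with the paper is that the perturbation $H^{\rm per}_{n;\boldsymbol{\alpha}}$ is purely imaginary, so $\psi_{n;\boldsymbol{\alpha}}$ and the Gibbs weight are a priori complex and the ``sum of squares'' and Jensen steps implicitly require the real-part/modulus bookkeeping that the paper itself glosses over.
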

\begin{proof}
	The proof follows by combining the same steps of the proof of Lemma II.5 in \cite{Itoi-2019} with Lemma \ref{Lemma2}-Item (i). Then it is omitted.
\end{proof}

\begin{lemma1}\label{upper-bound-variance}
	For each $(\beta, h, u, r)\in(0,\infty)^3\times \mathbb{R}$,
	the variance of the free-energy density $\psi_{n;\boldsymbol{\alpha}}$ in \eqref{log} is bounded from the above as follows:
	\begin{align*}
	{\rm Var}(\psi_{n;\boldsymbol{\alpha}})
	\leqslant
	{h^2 \widetilde{C} C_{4}^{1/2}\over\vert V_n\vert}.
	\end{align*}
\end{lemma1}
\begin{proof}
	The proof of this lemma follows by combining the inequality
	\begin{align*}
	{\rm Var}(\psi_{n;\boldsymbol{\alpha}})
	=
	\int_{0}^{1}\gamma'_{n;\boldsymbol{\alpha}}(s)\, {\rm d} s
	\leqslant 
	\gamma'_{n;\boldsymbol{\alpha}}(1)
	\end{align*}
	with Proposition \ref{bound-var-1}.	
\end{proof}

\subsection{Properties of the Energy function due to disorder} 
For any $n\geqslant 1$, let us define
\begin{align}\label{delta-n}
\Delta_{n}
=
{1\over\vert V_n\vert}\, 
\sum_{x\in V_n} g_{x} \phi_{x}
\end{align}
the part of the energy function \eqref{ham-1} due to the first disorder $g=(g_x)$.

Let ${\mathcal A}$ be the countable set of all $(\beta, h, u, r)\in (0,\infty)^3\times\mathbb{R}$
such that 
$\textstyle{\partial p_{\boldsymbol{\alpha}}\over \partial h^-}(\beta, h, u, r)\neq {\partial p_{\boldsymbol{\alpha}}\over \partial h^+}(\beta, h, u, r)$ and 
$\textstyle{\partial p_{\boldsymbol{\alpha}}\over \partial r^-}(\beta, h, u, r)\neq {\partial p_{\boldsymbol{\alpha}}\over \partial r^+}(\beta, h, u, r)$.

\begin{proposition}\label{cond-main} 
	For any $(\beta, h, u, r)\in {\mathcal A}^c$,
	\begin{itemize}
		\item[1)] 
		$\displaystyle
		\mathbb{E}\langle \Delta_{n}\rangle_{\boldsymbol{\alpha}} 
		\underset{ n\rightarrow\infty}{\xrightarrow{\hspace*{0,9cm}}}
		{\partial p_{\boldsymbol{\alpha}}\over\partial h}(\beta, h, u, r)
		$ and
		
		$
		\mathbb{E}\big|\smallavg{\Delta_{n}}_{\boldsymbol{\alpha}} -\mathbb{E}\langle\Delta_{n}\rangle_{\boldsymbol{\alpha}}\big| \underset{ n\rightarrow\infty}{\xrightarrow{\hspace*{0,9cm}}} 0;
		$
		\item[2)]
		$\displaystyle
		\mathbb{E}\langle R_{1,1}\rangle_{\boldsymbol{\alpha}} 
		\underset{ n\rightarrow\infty}{\xrightarrow{\hspace*{0,9cm}}}
		{\partial p_{\boldsymbol{\alpha}}\over\partial r}(\beta, h, u, r)
		$ and
		
		$
		\mathbb{E}\big|\smallavg{R_{1,1}}_{\boldsymbol{\alpha}} -\mathbb{E}\langle R_{1,1}\rangle_{\boldsymbol{\alpha}}\big| \underset{ n\rightarrow\infty}{\xrightarrow{\hspace*{0,9cm}}} 0.
		$
	\end{itemize}  
\end{proposition}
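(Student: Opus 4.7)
The plan is to realize both $\langle\Delta_n\rangle_{\boldsymbol{\alpha}}$ and $\langle R_{1,1}\rangle_{\boldsymbol{\alpha}}$ as partial derivatives of the random free-energy density $\psi_{n;\boldsymbol{\alpha}}$, and then extract convergence of expectations plus self-averaging from the classical convex-function convergence theorem combined with the concentration bound of Lemma \ref{upper-bound-variance}. A direct differentiation under the integral sign in \eqref{partition-function} will yield
\begin{align*}
\frac{\partial \psi_{n;\boldsymbol{\alpha}}}{\partial h} = \langle \Delta_n\rangle_{\boldsymbol{\alpha}},
\qquad
\frac{\partial \psi_{n;\boldsymbol{\alpha}}}{\partial r} = \langle R_{1,1}\rangle_{\boldsymbol{\alpha}},
\end{align*}
the first identity coming from the term $h\sum_x g_x\phi_x$ inside $-H_n$, the second from the factor $e^{r\phi_x^2}$ hidden in the a priori measure $\mathrm{D}\phi$. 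Exchanging these derivatives with $\mathbb{E}$ (justified by convexity in each variable together with the uniform moment bounds of Lemma \ref{Lemma2}) then identifies $\partial_h p_{n;\boldsymbol{\alpha}} = \mathbb{E}\langle\Delta_n\rangle_{\boldsymbol{\alpha}}$ and $\partial_r p_{n;\boldsymbol{\alpha}} = \mathbb{E}\langle R_{1,1}\rangle_{\boldsymbol{\alpha}}$.

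For the first assertion in each item I would then invoke the standard fact that if a sequence of convex functions converges pointwise to a limit that is differentiable at a point, then the corresponding derivatives converge at that point. Lemma \ref{lemma-exist-p} supplies pointwise convergence $p_{n;\boldsymbol{\alpha}}\to p_{\boldsymbol{\alpha}}$, Lemma \ref{convexity} supplies convexity, and the restriction to $\mathcal{A}^c$ is precisely the differentiability hypothesis in $h$ and $r$. This immediately gives $\mathbb{E}\langle\Delta_n\rangle_{\boldsymbol{\alpha}}\to \partial_h p_{\boldsymbol{\alpha}}$ and $\mathbb{E}\langle R_{1,1}\rangle_{\boldsymbol{\alpha}}\to \partial_r p_{\boldsymbol{\alpha}}$.

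For the self-averaging statements I would use the convexity sandwich: for each $\epsilon>0$ and each realization of the disorder,
\begin{align*}
\frac{\psi_{n;\boldsymbol{\alpha}}(h)-\psi_{n;\boldsymbol{\alpha}}(h-\epsilon)}{\epsilon} \leqslant \langle\Delta_n\rangle_{\boldsymbol{\alpha}} \leqslant \frac{\psi_{n;\boldsymbol{\alpha}}(h+\epsilon)-\psi_{n;\boldsymbol{\alpha}}(h)}{\epsilon},
\end{align*}
together with the deterministic analogue for $p_{n;\boldsymbol{\alpha}}$. Subtracting $p_{n;\boldsymbol{\alpha}}'(h) = \mathbb{E}\langle\Delta_n\rangle_{\boldsymbol{\alpha}}$, one bounds $|\langle\Delta_n\rangle_{\boldsymbol{\alpha}}-\mathbb{E}\langle\Delta_n\rangle_{\boldsymbol{\alpha}}|$ by $\epsilon^{-1}$ times a sum of quantities of the form $|\psi_{n;\boldsymbol{\alpha}}(h')-p_{n;\boldsymbol{\alpha}}(h')|$ at $h'\in\{h-\epsilon,h,h+\epsilon\}$, plus a deterministic remainder that is $o(1)$ as $\epsilon\to 0$ by virtue of the differentiability of $p_{\boldsymbol{\alpha}}$ at $h$. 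Taking $\mathbb{E}$ and applying Lemma \ref{upper-bound-variance} yields $\mathbb{E}|\psi_{n;\boldsymbol{\alpha}}(h')-p_{n;\boldsymbol{\alpha}}(h')| \leqslant \sqrt{\mathrm{Var}(\psi_{n;\boldsymbol{\alpha}}(h'))} = O(|V_n|^{-1/2})$; sending first $n\to\infty$ and then $\epsilon\to 0$ gives the desired $L^1$ vanishing. The $R_{1,1}$ case follows identically with $h$ replaced by $r$ throughout.

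The step I expect to be the main obstacle is the double limit $n\to\infty$, $\epsilon\to 0$ in the self-averaging argument: the bound of Lemma \ref{upper-bound-variance} involves the constants $\widetilde{C}$ and $C_4$, and for the argument to close cleanly these must be chosen locally uniformly in the coupling constants on a small neighborhood of $(h,r)$. This ultimately reduces to continuity in the parameters of the underlying Gibbs expectations, but is the one place where the book-keeping must be done with care.
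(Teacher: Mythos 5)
Your proposal is correct and is essentially the paper's own argument: the paper simply cites the standard convexity-based proof (Lemma 2.7 of \cite{chatterjee2015disorder}) and lists exactly the three ingredients you use — convexity of $\psi_{n;\boldsymbol{\alpha}}$ (Lemma \ref{convexity}), the variance bound of order $|V_n|^{-1}$ (Lemma \ref{upper-bound-variance}), and existence of the limit $p_{\boldsymbol{\alpha}}$ (Lemma \ref{lemma-exist-p}), with $\mathcal{A}^c$ supplying differentiability in $h$ and $r$. The uniformity issue you flag at the end is not actually an obstacle, since the limits are taken in the order $n\to\infty$ first at fixed $\epsilon$, so the constants $\widetilde{C}$ and $C_4$ need only be finite at the three fixed parameter values $h-\epsilon$, $h$, $h+\epsilon$.
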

\begin{proof}
	The proof of this proposition is quite standard when we have the validity of the following three fundamental ingredients; see, e.g., Lemma 2.7 of \cite{chatterjee2015disorder}:
	\begin{itemize}
		\item the convexity of $\psi_{n;\boldsymbol{\alpha}}$, which is guaranteed by Lemma \ref{convexity};
		\item the variance of $\log Z_{n;\boldsymbol{\alpha}}$ doesn't grow faster than $\vert V_n\vert$, which is guaranteed by Lemma \ref{upper-bound-variance}; and that
		\item the limit $p_{\boldsymbol{\alpha}}= \lim_{n\ra\infty} p_{n;\boldsymbol{\alpha}}$ exists and is finite, which is guaranteed by Lemma \ref{lemma-exist-p}.	
	\end{itemize}
	Therefore, the proof follows.
\end{proof}

\begin{proposition}\label{prop-main-1}
	Let $F(\phi):\mathbb{R}^{V_n} \to \mathbb{R}$ be a Borel function. If the following hold: 
	\begin{itemize}
		\item[(a)] $\mathbb{E}(\langle F^2(\phi)\rangle_{\boldsymbol{\alpha}}-\langle F(\phi)\rangle^2_{\boldsymbol{\alpha}}) 
		=
		\mathbb{E}(\langle F^2(\phi)\rangle_{\boldsymbol{\alpha}}-\mathbb{E}\langle F(\phi)\rangle^2_{\boldsymbol{\alpha}})
		\underset{ n\rightarrow\infty}{\xrightarrow{\hspace*{0,9cm}}} 0$;
		\item[(b)] $\mathbb{E}|\smallavg{F(\phi)}_{\boldsymbol{\alpha}} -\mathbb{E}\langle F(\phi)\rangle_{\boldsymbol{\alpha}}| \underset{ n\rightarrow\infty}{\xrightarrow{\hspace*{0,9cm}}} 0$.
	\end{itemize}
	Then
	\[
	\mathbb{E}\big\langle|F(\phi) - \mathbb{E}\langle F(\phi)\rangle_{\boldsymbol{\alpha}}  | \big\rangle_{\boldsymbol{\alpha}}  
	\underset{ n\rightarrow\infty}{\xrightarrow{\hspace*{0,9cm}}} 0.
	\]
\end{proposition}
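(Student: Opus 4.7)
The plan is to separate the deterministic constant $\mathbb{E}\langle F(\phi)\rangle_{\boldsymbol{\alpha}}$ from the random field configuration $F(\phi)$ by interposing the disorder--dependent quantity $\langle F(\phi)\rangle_{\boldsymbol{\alpha}}$ via the triangle inequality, which naturally decomposes the target into two pieces, each governed by one of the two hypotheses.

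More concretely, I would first write, pointwise in $\phi$ and in the disorder,
\[
\bigl|F(\phi) - \mathbb{E}\langle F(\phi)\rangle_{\boldsymbol{\alpha}}\bigr|
\;\leqslant\;
\bigl|F(\phi) - \langle F(\phi)\rangle_{\boldsymbol{\alpha}}\bigr|
+
\bigl|\langle F(\phi)\rangle_{\boldsymbol{\alpha}} - \mathbb{E}\langle F(\phi)\rangle_{\boldsymbol{\alpha}}\bigr|.
\]
Applying $\langle\,\cdot\,\rangle_{\boldsymbol{\alpha}}$ and then $\mathbb{E}$ to both sides, and noting that the second summand on the right is constant with respect to $\langle\,\cdot\,\rangle_{\boldsymbol{\alpha}}$, I obtain
\[
\mathbb{E}\bigl\langle\bigl|F(\phi) - \mathbb{E}\langle F\rangle_{\boldsymbol{\alpha}}\bigr|\bigr\rangle_{\boldsymbol{\alpha}}
\;\leqslant\;
\mathbb{E}\bigl\langle\bigl|F(\phi) - \langle F\rangle_{\boldsymbol{\alpha}}\bigr|\bigr\rangle_{\boldsymbol{\alpha}}
+
\mathbb{E}\bigl|\langle F\rangle_{\boldsymbol{\alpha}} - \mathbb{E}\langle F\rangle_{\boldsymbol{\alpha}}\bigr|.
\]
The second term tends to $0$ by hypothesis (b), so the whole problem reduces to controlling the first term.

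For the first term I would apply the Cauchy--Schwarz inequality inside the Gibbs measure $\langle\,\cdot\,\rangle_{\boldsymbol{\alpha}}$, which, recognizing $\langle F\rangle_{\boldsymbol{\alpha}}$ as the $\langle\,\cdot\,\rangle_{\boldsymbol{\alpha}}$-mean of $F$, gives
\[
\bigl\langle\bigl|F(\phi) - \langle F\rangle_{\boldsymbol{\alpha}}\bigr|\bigr\rangle_{\boldsymbol{\alpha}}
\;\leqslant\;
\bigl(\langle F^2(\phi)\rangle_{\boldsymbol{\alpha}} - \langle F(\phi)\rangle_{\boldsymbol{\alpha}}^2\bigr)^{1/2}.
\]
Taking $\mathbb{E}$ on both sides and using Jensen's inequality for the concave function $\sqrt{\,\cdot\,}$ yields
\[
\mathbb{E}\bigl\langle\bigl|F(\phi) - \langle F\rangle_{\boldsymbol{\alpha}}\bigr|\bigr\rangle_{\boldsymbol{\alpha}}
\;\leqslant\;
\bigl(\mathbb{E}(\langle F^2(\phi)\rangle_{\boldsymbol{\alpha}} - \langle F(\phi)\rangle_{\boldsymbol{\alpha}}^2)\bigr)^{1/2},
\]
which vanishes as $n\to\infty$ by hypothesis (a). Combining the two bounds completes the proof.

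There is no real obstacle here; the argument is essentially a triangle--inequality plus Cauchy--Schwarz/Jensen exercise in the two--level expectation $\mathbb{E}\langle\,\cdot\,\rangle_{\boldsymbol{\alpha}}$. The only thing to be careful about is the order in which the Gibbs expectation and the disorder expectation are taken, and treating $\mathbb{E}\langle F\rangle_{\boldsymbol{\alpha}}$ consistently as a deterministic number so that it passes through $\langle\,\cdot\,\rangle_{\boldsymbol{\alpha}}$ unchanged.
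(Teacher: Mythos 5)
Your argument is correct and is essentially the paper's own proof: the same triangle-inequality decomposition through $\langle F\rangle_{\boldsymbol{\alpha}}$, followed by bounding the first term by $\sqrt{\mathbb{E}(\langle F^2\rangle_{\boldsymbol{\alpha}}-\langle F\rangle_{\boldsymbol{\alpha}}^2)}$ (the paper cites Lyapunov's inequality for the combined expectation $\mathbb{E}\langle\cdot\rangle_{\boldsymbol{\alpha}}$, while you split it into Cauchy--Schwarz under the Gibbs measure plus Jensen for the square root, which yields the identical bound).
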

\begin{proof} The proof is immediate since, by triangle and Lyapunov inequalities, 
	\begin{align*}
	\mathbb{E}\big\langle|F(\phi) - \mathbb{E}\langle F(\phi)\rangle_{\boldsymbol{\alpha}}  | \big\rangle_{\boldsymbol{\alpha}}
	&\leqslant
	\mathbb{E}\langle|F(\phi)-\langle F(\phi)\rangle_{\boldsymbol{\alpha}}|\rangle_{\boldsymbol{\alpha}}
	+
	\mathbb{E}\big|\smallavg{F(\phi)}_{\boldsymbol{\alpha}} -\mathbb{E}\langle F(\phi)\rangle_{\boldsymbol{\alpha}}\big|
	\\[0,2cm]
	&\leqslant
	\sqrt{\mathbb{E}(\langle F^2(\phi)\rangle_{\boldsymbol{\alpha}}-\langle F(\phi)\rangle^2_{\boldsymbol{\alpha}})}
	+
	\mathbb{E}\big|\smallavg{F(\phi)}_{\boldsymbol{\alpha}} -\mathbb{E}\langle F(\phi)\rangle_{\boldsymbol{\alpha}}\big|.	
	\end{align*}	
\end{proof}

\begin{lemma1}\label{lemma-main}
	For any $(\beta, h, u, r)\in {\mathcal A}^c$,
	\[
	\mathbb{E}\big\langle|\Delta_n - \mathbb{E}\langle\Delta_n\rangle_{\boldsymbol{\alpha}} | \big\rangle_{\boldsymbol{\alpha}}  
	\underset{ n\rightarrow\infty}{\xrightarrow{\hspace*{0,9cm}}} 0.
	\]
\end{lemma1}
\begin{proof}
	Note that, by combining Proposition \ref{cond-main}-Item 1 and Proposition \ref{prop-main-1}, with $F(\phi)=\Delta_n$, it is sufficient to verify that
	\begin{align}\label{obj-conv}
	\mathbb{E}\big(\smallavg{\Delta_n^2}_{\boldsymbol{\alpha}}-\smallavg{\Delta_n}^2_{\boldsymbol{\alpha}}\big)
	\underset{ n\rightarrow\infty}{\xrightarrow{\hspace*{0,9cm}}} 0.
	\end{align}
	
	Indeed,	
	as sub-product of proof of Lemma II.7 in \cite{Itoi-2019} we have the following inequality:
	\begin{align*}
	\mathbb{E}\big(\smallavg{\Delta_n^2}_{\boldsymbol{\alpha}}-\smallavg{\Delta_n}^2_{\boldsymbol{\alpha}}\big)
	&\leqslant
	{1\over h^2}\,
	\Biggl(\,
	\sum_{x,y\in V_n}\biggl(\mathbb{E} {\partial^4 \psi_{n;\boldsymbol{\alpha}}\over \partial g_x^2 \partial g_y^2}\biggr)^2 
	\Biggr)^{1/2}
	+
	{1\over \vert V_n\vert^2}\,
	\sum_{x\in V_n} 
	\mathbb{E}\big(\smallavg{\phi_x^2}_{\boldsymbol{\alpha}}- \smallavg{\phi_x}_{\boldsymbol{\alpha}}^2\big).
	\end{align*}
	Using Proposition \eqref{bound-var-1}, with $k=4$, $x_1=x_2=x$ and $x_3=x_4=y$:
	\begin{align*}
	\sum_{x,y\in V_n}\biggl(\mathbb{E} {\partial^4 \psi_{n;\boldsymbol{\alpha}}\over \partial g_x^2 \partial g_y^2}\biggr)^2
	=
	\gamma''''_{n;\boldsymbol{\alpha}}(0),
	\end{align*}
	we written the last inequality as follows
	\begin{align*}
	\mathbb{E}\big(\smallavg{\Delta_n^2}_{\boldsymbol{\alpha}}-\smallavg{\Delta_n}^2_{\boldsymbol{\alpha}}\big)
	&\leqslant
	{1\over h^2}\,
	\big[\gamma''''_{n;\boldsymbol{\alpha}}(0) \big]^{1/2}
	+
	{1\over \vert V_n\vert^2}\,
	\sum_{x\in V_n} 
	\mathbb{E}\big(\smallavg{\phi_x^2}_{\boldsymbol{\alpha}}- \smallavg{\phi_x}_{\boldsymbol{\alpha}}^2\big).
	\end{align*}
	By Proposition \ref{bound-var-1} and Lemma \ref{Lemma2}-Item (i), the right-hand expression is at most
	\begin{align*}
	{1\over h}\,
	{\sqrt{6 \widetilde{C} C_{4}^{1/2} } \over \sqrt{\vert V_n\vert}} 
	+
	{\widetilde{C} C_{4}^{1/2} \over\vert V_n\vert}.
	\end{align*}
	Therefore,
	\begin{align*}
	\mathbb{E}\big(\smallavg{\Delta_n^2}_{\boldsymbol{\alpha}}-\smallavg{\Delta_n}^2_{\boldsymbol{\alpha}}\big)
	\leqslant
	{1\over h}\,
	{\sqrt{6 \widetilde{C} C_{4}^{1/2} } \over \sqrt{\vert V_n\vert}} 
	+
	{\widetilde{C} C_{4}^{1/2} \over\vert V_n\vert}.
	\end{align*}
	Letting $n\to\infty$ in the above inequality, \eqref{obj-conv} follows.
\end{proof}

\begin{lemma1}\label{lemma-main-1}
	For any $(\beta, h, u, r)\in {\mathcal A}^c$,
	\[
	\mathbb{E}\big\langle|R_{1,1} - \mathbb{E}\langle R_{1,1}\rangle_{\boldsymbol{\alpha}}  | \big\rangle_{\boldsymbol{\alpha}}  
	\underset{ n\rightarrow\infty}{\xrightarrow{\hspace*{0,9cm}}} 0.
	\]
\end{lemma1}
\begin{proof} 
	Note that, by combining Proposition \ref{cond-main}-Item 2 and Proposition \ref{prop-main-1}, with $F(\phi)=R_{1,1}$, it is sufficient to verify that
	\begin{align}\label{obj-conv-1}
	\mathbb{E}\big(\smallavg{R_{1,1}^2}_{\boldsymbol{\alpha}}-\smallavg{R_{1,1}}^2_{\boldsymbol{\alpha}}\big)
	=
	{1\over\vert V_n\vert^2}\, \sum_{x,y\in V_n} \mathbb{E}\langle\phi_x^2;\phi_y^2\rangle_{\boldsymbol{\alpha}}
	\underset{ n\rightarrow\infty}{\xrightarrow{\hspace*{0,9cm}}} 0,
	\end{align}	
	where $\smallavg{\phi_x;\phi_y}_{\boldsymbol{\alpha}}
	\coloneqq
	\smallavg{\phi_x \phi_y}_{\boldsymbol{\alpha}}-
	\smallavg{\phi_x}_{\boldsymbol{\alpha}}\smallavg{\phi_y}_{\boldsymbol{\alpha}}$ is the truncated two-point correlation.
	But \eqref{obj-conv-1} results from a straightforward adaptation of the arguments of the proof of Lemma II.13 in \cite{Itoi-2019}, so this one is omitted.
\end{proof}
%

%
%

%
%
%
\subsection{The Ghirlanda-Guerra identities}
Lemmas \ref{lemma-main} and \ref{lemma-main-1} enable us to derive the Ghirlanda-Guerra identities \cite{YTchen2019,ghirlanda1998general,panchenko2010,panchenko2011ghirlanda,PANCHENKO2010189,book-talagrand,talagrand2003spin} (also sometimes called the Aizenman-Contucci identities \cite{aizenman1998stability}) at almost all $(\beta, h, u, r)$ for the random mixed-spin Ginzburg-Landau model.
As usual, details are presented below for
the sake of completeness.

Take any integer $m\geqslant 2$ and let $\phi^1,\ldots, \phi^m, \phi^{m+1}$ 
denote $m+1$ spin configurations drawn independently from the Gibbs measure.  
Let $R_{\ell,\ell'}$ be the overlap between $\phi^\ell$ and $\phi^{\ell'}$ defined in \eqref{overlap}, 
with $\ell,\ell'=1,\ldots,m+1$. 
Let $f:\mathbb{R}^{m(m-1)/2}\to[-1,1]$ be a bounded measurable function
of these overlaps that not change with $n$.
%
\begin{lemma1}\label{ggid}
	Consider the random field mixed-spin Ginzburg-Landau model  defined by the Gibbs measure in \eqref{gibbs-measure2}. 
	Then, if $f$ is as above, the following identity   
	\begin{align*} 
	\mathbb{E}\smallavg{ f R_{1,m+1}}_{\boldsymbol{\alpha}}
	- 
	\frac{1}{m}  \,
	\mathbb{E}\smallavg{ f}_{\boldsymbol{\alpha}}
	\mathbb{E}\smallavg{R_{1,2}}_{\boldsymbol{\alpha}}
	- 
	\frac{1}{m}\,
	\sum_{s=2}^m \mathbb{E}\smallavg{ f R_{1,s}}_{\boldsymbol{\alpha}}
	\underset{ n\rightarrow\infty}{\xrightarrow{\hspace*{0,9cm}}} 0
	\end{align*}
	is satisfied at almost all $(\beta, h, u, r)$.
\end{lemma1}
\begin{proof}
	Since $\|f\|_\infty\leqslant 1$, from Lemmas \ref{lemma-main} and \ref{lemma-main-1}, it follows that 
	\begin{align}\label{first-part}
	&\left|
	\mathbb{E}\big\langle\Delta_{n}(\phi^1) f\rangle_{\boldsymbol{\alpha}}
	-
	\mathbb{E}\big\langle\Delta_{n}(\phi^1)\big\rangle_{\boldsymbol{\alpha}}
	\mathbb{E}\big\langle f\big\rangle_{\boldsymbol{\alpha}}
	\right|
	\leqslant 
	\mathbb{E}\big\langle
	|\Delta_n - \mathbb{E}\langle\Delta_n\rangle_{\boldsymbol{\alpha}} | \big\rangle_{\boldsymbol{\alpha}}
	\underset{ n\rightarrow\infty}{\xrightarrow{\hspace*{0,9cm}}} 0;
	\\[0,2cm]
	&
	\left|
	\mathbb{E}\big\langle R_{1,1} f\rangle_{\boldsymbol{\alpha}}
	-
	\mathbb{E}\big\langle R_{1,1}\big\rangle_{\boldsymbol{\alpha}}
	\mathbb{E}\big\langle f\big\rangle_{\boldsymbol{\alpha}}
	\right|
	\leqslant 
	\mathbb{E}\big\langle
	|R_{1,1} - \mathbb{E}\langle R_{1,1}\rangle_{\boldsymbol{\alpha}} | \big\rangle_{\boldsymbol{\alpha}}
	\underset{ n\rightarrow\infty}{\xrightarrow{\hspace*{0,9cm}}} 0; \label{first-part-1}
	\end{align}
	where $\Delta_{n}$ is as in \eqref{delta-n}.
	
	Let $\widehat{\mathbb{E}}$ be the expectation over $\xi_{\bf x}$, $p\geqslant 2$. Integration by parts gives
	\begin{align}\label{int-parts}
	\mathbb{E}\big\langle\Delta_n(\phi^1) f\big\rangle_{\boldsymbol{\alpha}}
	&= 
	{1\over \vert V_n\vert}\,
	\sum_{x\in V_n} 
	\mathbb{E}(g_x \widehat{\mathbb{E}}\langle\phi_x^{1} f\rangle_{\boldsymbol{\alpha}})
	\nonumber
	\\[0,15cm]
	&
	=
	{1\over \vert V_n\vert}\,
	\sum_{x\in V_n} 
	\mathbb{E}\biggl( {\partial \widehat{\mathbb{E}}\langle\phi_x^{1} f \rangle_{\boldsymbol{\alpha}} \over \partial g_x}\biggr)
	\nonumber
	\\[0,15cm]
	&
	=
	h
	\mathbb{E}\biggl\langle\Big(\sum_{\ell=1}^{m}R_{1,\ell}-mR_{1,m+1}\Big)f \biggr\rangle_{\boldsymbol{\alpha}}.
	\end{align}
	In the particular case where $f=1$ and $m=1$, 
	\begin{align*}
	\mathbb{E}\big\langle\Delta_n(\phi^1)\big\rangle_{\boldsymbol{\alpha}}
	=
	h
	\mathbb{E}\langle R_{1,1}-R_{1,2}\rangle_{\boldsymbol{\alpha}}.
	\end{align*}
	By combining the above identity with \eqref{first-part} and \eqref{int-parts}, we have
	\begin{align*}
	\sup_{\|f\|_\infty\leqslant 1}
	\biggr|
	\mathbb{E}\langle R_{1,1}-R_{1,2}\rangle_{\boldsymbol{\alpha}} \mathbb{E}\langle f\rangle_{\boldsymbol{\alpha}}
	-
	\mathbb{E}\biggl\langle\Big(\sum_{\ell=1}^{m}R_{1,\ell}-mR_{1,m+1}\Big)f \biggr\rangle_{\boldsymbol{\alpha}}
	\biggr|
	\underset{ n\rightarrow\infty}{\xrightarrow{\hspace*{0,9cm}}} 0.
	\end{align*}
	Finally, by applying \eqref{first-part-1} the proof of lemma follows.
\end{proof}

\subsection{The self-averaging of the spin overlap}
\begin{proposition}\label{bound-truncated-quadratic}
	For any $(\beta, h, u, r)\in(0,\infty)^3\times \mathbb{R}$,
	\begin{align*}
	\sum_{x,y\in V_n} 
	(\mathbb{E}\smallavg{\phi_x;\phi_y}_{\boldsymbol{\alpha}})^2
	\leqslant 
	{\widetilde{C} C_{4}^{1/2}\over h^2}\,
	\vert V_n\vert ,
	\end{align*}
	where $\smallavg{\phi_x;\phi_y}_{\boldsymbol{\alpha}}$ is the truncated two-point correlation.
\end{proposition}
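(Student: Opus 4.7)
The plan is to combine a Gaussian integration-by-parts identity for the disorder $g$ with Bessel's inequality applied to the orthonormal family $(g_y)_{y\in V_n}\subset L^2$, and then close the estimate using the second-moment bound from Lemma~\ref{Lemma2}-Item~(i). The whole argument exploits only the Gaussianity of $g$; the perturbation variables $\xi$ simply go along for the ride.

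First I would differentiate $\langle \phi_x\rangle_{\boldsymbol{\alpha}}$ with respect to $g_y$. Since the $g$-dependence of the Hamiltonian in \eqref{ham-1}--\eqref{Hamiltonian-g} enters only through the linear term $-h\sum g_x\phi_x$, one immediately gets $\partial_{g_y}\langle \phi_x\rangle_{\boldsymbol{\alpha}} = h\,\langle \phi_x;\phi_y\rangle_{\boldsymbol{\alpha}}$. Because $\widehat{\mathbb{E}}$ (integration over $\xi$) is independent of $g$ and commutes with $\partial_{g_y}$, a standard Gaussian integration by parts yields
\[
\mathbb{E}\langle \phi_x;\phi_y\rangle_{\boldsymbol{\alpha}} = \frac{1}{h}\,\mathbb{E}\bigl(g_y\,\widehat{\mathbb{E}}\langle \phi_x\rangle_{\boldsymbol{\alpha}}\bigr).
\]

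Next, for each fixed $x\in V_n$, I would set $Y_x\coloneqq \widehat{\mathbb{E}}\langle \phi_x\rangle_{\boldsymbol{\alpha}}$, a random variable depending only on $g$. Since $(g_y)_{y\in V_n}$ is an orthonormal family in $L^2$, Bessel's inequality gives $\sum_{y\in V_n}(\mathbb{E}(g_y Y_x))^2\leqslant \mathbb{E}(Y_x^2)$, so that
\[
\sum_{y\in V_n}\bigl(\mathbb{E}\langle \phi_x;\phi_y\rangle_{\boldsymbol{\alpha}}\bigr)^2 \leqslant \frac{1}{h^2}\,\mathbb{E}(Y_x^2).
\]
Two Jensen steps (inside $\widehat{\mathbb{E}}$ and then inside $\langle\cdot\rangle_{\boldsymbol{\alpha}}$) yield $Y_x^2\leqslant \widehat{\mathbb{E}}\langle \phi_x^2\rangle_{\boldsymbol{\alpha}}$, and Lemma~\ref{Lemma2}-Item~(i) with $k=2$ provides the uniform bound $\mathbb{E}\langle\phi_x^2\rangle_{\boldsymbol{\alpha}}\leqslant \widetilde{C}\,C_4^{1/2}$. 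Summing over $x\in V_n$ then delivers the claimed estimate.

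The only delicate point is the rigorous justification of the Gaussian integration by parts: one needs $\widehat{\mathbb{E}}\langle \phi_x\rangle_{\boldsymbol{\alpha}}$ to be $C^1$ in each $g_y$ with at-most-polynomial growth, so that the boundary terms vanish. This is ensured by the stabilizing quartic term $u\phi_x^4$ in the reference measure ${\rm D}\phi$, which provides exactly the integrability required to commute $\partial_{g_y}$ with both $\widehat{\mathbb{E}}$ and $\langle\cdot\rangle_{\boldsymbol{\alpha}}$---the same mechanism underlying Lemma~II.5 of \cite{Itoi-2019} invoked in Proposition~\ref{bound-var-1}. Once this is granted, every other step is routine.
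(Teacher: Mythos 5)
Your proposal is correct, and it reaches the stated bound by a genuinely different route than the paper. The paper's proof is a two-line reduction: it writes $\partial^2\psi_{n;\boldsymbol{\alpha}}/\partial g_x\partial g_y=(h^2/|V_n|)\smallavg{\phi_x;\phi_y}_{\boldsymbol{\alpha}}$, identifies $\sum_{x,y}(\mathbb{E}\smallavg{\phi_x;\phi_y}_{\boldsymbol{\alpha}})^2$ with $(|V_n|^2/h^4)\,\gamma''_{n;\boldsymbol{\alpha}}(0)$, and then invokes Proposition~\ref{bound-var-1} with $k=2$, $s_0=0$; the real work is thus outsourced to the interpolation machinery behind the generating function $\gamma_{n;\boldsymbol{\alpha}}$ (Lemma~II.5 of \cite{Itoi-2019}), namely the monotonicity of the successive derivatives of $\gamma_{n;\boldsymbol{\alpha}}$ along the path $G(s)=\sqrt{s}\,g+\sqrt{1-s}\,g'$. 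You bypass that machinery entirely: Gaussian integration by parts turns $\mathbb{E}\smallavg{\phi_x;\phi_y}_{\boldsymbol{\alpha}}$ into $h^{-1}\mathbb{E}(g_y\,\widehat{\mathbb{E}}\smallavg{\phi_x}_{\boldsymbol{\alpha}})$, Bessel's inequality for the orthonormal family $(g_y)$ controls the sum over $y$ by $h^{-2}\mathbb{E}(\widehat{\mathbb{E}}\smallavg{\phi_x}_{\boldsymbol{\alpha}})^2$, and two applications of Jensen plus Lemma~\ref{Lemma2}(i) with $k=2$ finish the job, with exactly the constant $\widetilde{C}C_4^{1/2}/h^2$ claimed. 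Your argument is more self-contained and elementary for this particular proposition (it needs only the Gaussianity of $g$ and Lemma~\ref{Lemma2}(i)), whereas the paper's route amortizes the cost of Proposition~\ref{bound-var-1}, which it must develop anyway for the fourth-derivative bound used in Lemma~\ref{lemma-main}; in fact the two arguments are cousins, since the $\gamma'(1)$ bound underlying Proposition~\ref{bound-var-1} is itself a Gaussian integration-by-parts/Bessel estimate in disguise. One caveat applies equally to both proofs and is worth being aware of: because the perturbation $H^{\rm per}_{n;\boldsymbol{\alpha}}$ is purely imaginary, $\smallavg{\cdot}_{\boldsymbol{\alpha}}$ is a complex weight rather than a probability measure, so the Jensen step $\smallavg{\phi_x}_{\boldsymbol{\alpha}}^2\leqslant\smallavg{\phi_x^2}_{\boldsymbol{\alpha}}$ (like several analogous steps in the paper itself, e.g.\ in Lemma~\ref{prop-fund}) is being applied at the same informal level of rigor the paper adopts throughout; this is not a defect of your proof relative to the paper's.
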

\begin{proof}
	It is well-known that
	\begin{align*}
	{\partial^2 \psi_{n;\boldsymbol{\alpha}}\over \partial g_x \partial g_y}
	=
	{h^2\over \vert V_n\vert}\,
	\smallavg{\phi_x;\phi_y}_{\boldsymbol{\alpha}}.
	\end{align*}
	Hence,
	\begin{align*}
	\sum_{x,y\in V_n} 
	(\mathbb{E}\smallavg{\phi_x;\phi_y}_{\boldsymbol{\alpha}})^2
	=
	{\vert V_n\vert^2\over h^4}\,
	\sum_{x,y\in V_n} 
	\biggl(
	\mathbb{E}{\partial^2 \psi_{n;\boldsymbol{\alpha}}\over \partial g_x \partial g_y}
	\biggr)^2
	=
	{\vert V_n\vert^2\over h^4}\, \gamma''_{n;\boldsymbol{\alpha}}(0).
	\end{align*}
	Then the proof follows by applying Proposition \ref{bound-var-1}.
\end{proof}

\begin{lemma1}[Self-averaging of the overlap]
	\label{mainlmm-il}
	For any $(\beta, h, u, r)\in (0,\infty)^3\times \mathbb{R}$,
	\begin{align*}
	&\mathbb{E}\big(\smallavg{R_{1,2}^2}_{\boldsymbol{\alpha}} 
	- 
	\smallavg{R_{1,2}}^2_{\boldsymbol{\alpha}} 
	\big)
	=
	\mathbb{E}\big\langle R_{1,2}-\smallavg{R_{1,2}}_{\boldsymbol{\alpha}} \big\rangle^2_{\boldsymbol{\alpha}} 
	\underset{ n\rightarrow\infty}{\xrightarrow{\hspace*{0,9cm}}} 0.
	\end{align*}
\end{lemma1}
\begin{proof}
	As sub-product of proof of Lemma II.12  in \cite{Itoi-2019}, the following inequality remain valid:
	\begin{align*}
	\mathbb{E}\big(\smallavg{R_{1,2}^2}_{\boldsymbol{\alpha}} 
	- 
	\smallavg{R_{1,2}}^2_{\boldsymbol{\alpha}} 
	\big)
	\leqslant
	{2\over\vert V_n\vert^2}\,
	\Biggl(\,
	\sum_{x,y\in V_n} 
	\mathbb{E} \smallavg{\phi_x;\phi_y}_{\boldsymbol{\alpha}}^2
	\sum_{x,y\in V_n} 
	\mathbb{E}^{1/2}
	\smallavg{\phi_x^4}_{\boldsymbol{\alpha}}\,
	\mathbb{E}^{1/2}
	\smallavg{\phi_y^4}_{\boldsymbol{\alpha}}
	\Biggr)^{1/2},
	\end{align*}
	where the FKG inequality plays an important role in achieving this one.
	By Lemma \ref{Lemma2}-Item (i), the right-hand expression is at most
	\begin{align*}
	2
	\sqrt{
		\widetilde{C} C_{8}^{1/2} 
	}
	\Biggl(
	{1\over\vert V_n\vert^2}
	\sum_{x,y\in V_n} 
	\mathbb{E} \smallavg{\phi_x;\phi_y}_{\boldsymbol{\alpha}}^2
	\Biggr)^{1/2}.
	\end{align*}
	Therefore,
	\begin{align}\label{ineq-1}
	\mathbb{E}\big(\smallavg{R_{1,2}^2}_{\boldsymbol{\alpha}} 
	- 
	\smallavg{R_{1,2}}^2_{\boldsymbol{\alpha}} 
	\big)
	\leqslant
	2
	\sqrt{
		\widetilde{C} C_{8}^{1/2} 
	}
	\Biggl(
	{1\over\vert V_n\vert^2}
	\sum_{x,y\in V_n} 
	\mathbb{E} \smallavg{\phi_x;\phi_y}_{\boldsymbol{\alpha}}^2
	\Biggr)^{1/2}.
	\end{align}
	
	On the other hand, again, as sub-product of proof of Lemma II.11  in \cite{Itoi-2019}, we have
	\begin{align*}
	{1\over\vert V_n\vert^2}\,
	\sum_{x,y\in V_n} 
	\mathbb{E} \smallavg{\phi_x;\phi_y}_{\boldsymbol{\alpha}}^2
	&\leqslant
	{C'\over \vert V_n\vert^2}\,
	\Biggl(\,
	\sum_{x,y\in V_n} 
	(\mathbb{E}\smallavg{\phi_x;\phi_y}_{\boldsymbol{\alpha}})^2  \vert V_n\vert^2
	\Biggr)^{1/2}
	\nonumber
	\\[0,2cm] 
	&+
	4
	\Biggl(
	{1\over C'}\,
	\sup_{x\in V_n}\mathbb{E}\smallavg{\phi_x^8}_{\boldsymbol{\alpha}} \
	{1\over \vert V_n\vert^2}\,
	\biggl( \,
	\sum_{x,y\in V_n} 
	(\mathbb{E}\smallavg{\phi_x;\phi_y}_{\boldsymbol{\alpha}})^2  \vert V_n\vert^2
	\biggr)^{1/2} \,
	\Biggr)^{1/2},
	\end{align*}
	where $C'$ is a positive constant. By using Proposition \ref{bound-truncated-quadratic} and Lemma \ref{Lemma2}-Item (i), the above inequality is 
	\begin{align*}
	\leqslant
	{C'\over h}\,
	{\sqrt{\widetilde{C} C_{4}^{1/2}}\over \sqrt{\vert V_n\vert} }
	+
	4\ 	
	\sqrt{\widetilde{C} C_{16}^{1/2} \over C' h }
	\,	
	{\sqrt[4]{\widetilde{C} C_{4}^{1/2}}\over \sqrt[4]{\vert V_n\vert} }. 
	\end{align*}
	Hence,
	\begin{align*}
	{1\over\vert V_n\vert^2}\,
	\sum_{x,y\in V_n} 
	\mathbb{E} \smallavg{\phi_x;\phi_y}_{\boldsymbol{\alpha}}^2
	\leqslant
	{C'\over h}\,
	{\sqrt{\widetilde{C} C_{4}^{1/2}}\over \sqrt{\vert V_n\vert} }
	+
	4\ 	
	\sqrt{\widetilde{C} C_{16}^{1/2} \over C' h }
	\,	
	{\sqrt[4]{\widetilde{C} C_{4}^{1/2}}\over \sqrt[4]{\vert V_n\vert} }.
	\end{align*}
	
	By combining the above inequality with \eqref{ineq-1} and then letting $n\to\infty$, the proof follows.
\end{proof}

%
%
%
\subsection{Concluding the Proof of Theorem \ref{rsbthm}}
Given results like the self-averaging of the overlap (Lemma \ref{mainlmm-il}) and the Ghirlanda-Guerra identities (Lemma \ref{ggid}), the derivation of the absence of replica symmetry breaking for the random field mixed-spin Ginzburg-Landau model is quite standard; see, e.g., references \cite{chatterjee2015disorder,Itoi2019,Itoi-2019}. \qed
\section*{Acknowledgements}
\noindent
We would like to thank to H. Saulo and J. Roldan for many valuable comments and careful reading of this manuscript.
This study was financed in part by the Coordena\c{c}\~{a}o de Aperfei\c{c}oamento de Pessoal de N\'{i}vel Superior - Brasil (CAPES) - Finance Code 001.




\end{document}